\newtheorem{Theorem}{Theorem}[section]
\newtheorem{Definition}[Theorem]{Definition}
\newtheorem{Remark}[Theorem]{Remark}
\def\nocolor#1{}
\begin{document}

\title{SVD-Based Graph Fractional Fourier Transform on Directed Graphs and Its Application}

%\author{Lu Li, Muhan Wang, Haiye Huo\thanks{Corresponding author.}\\
%\normalsize{Department of Mathematics, School of Science, Nanchang University,\\ Nanchang~330031, Jiangxi, China}\\
%\mbox{} \\
%\normalsize{Emails: 13568007660@163.com; a2296201003@163.com; hyhuo@ncu.edu.cn}}

\author{
Lu Li, Muhan Wang, Haiye Huo\thanks{Corresponding author.}\\
Department of Mathematics, School of Mathematics and Computer\\
Sciences, Nanchang University,
Nanchang 330031, Jiangxi, China\\
Emails: 13568007660@163.com; a2296201003@163.com;\\
hyhuo@ncu.edu.cn
}

\date{}
\maketitle

\textit{Abstract}.\,\,
Real-world signals frequently reside on directed Cartesian product graphs, including digital images, sensor networks, and meteorological temperature records. Designing a transform method suitable for processing such multi-dimensional graph signals within the fractional Fourier transform domain remains a critical challenge in graph signal processing (GSP). This paper proposes two novel graph fractional Fourier transforms (GFRFTs) for multi-dimensional signals defined on such directed product graphs and comprehensively investigates their denoising capabilities. Our contributions are fourfold: (1) We propose two distinct two-dimensional GFRFTs based on singular value decompositions of some fractional Laplacian matrices; (2) We generalize these transforms to multi-dimensional graph fractional Fourier transforms (MGFRFTs), establishing a powerful fractional domain analysis framework for multi-dimensional GSP; (3) We investigate the signal reconstruction capability of our proposed GFRFTs, as well as their computational complexity; and (4) We validate the practical utility of our approach through denoising experiments on real-world meteorological temperature datasets.

\textit{Keywords.}
Graph signal processing; graph fractional Fourier transform; directed graph; singular value decomposition; Cartesian product

%\textbf{MOS subject classification:} 94A20; 94A12; 42C15; 42A38; 43A32

%-------------------------------introduction-----------------
\section{Introduction}\label{sec1}
	
    Graph signal processing (GSP) \cite{OFK21} can effectively process signals with irregular structures defined on graphs and has been widely used in sensor networks \cite{Jablonski17}, machine learning \cite{DTT20}, brain network function analysis \cite{HBM18}, and smart grid \cite{RS21}, etc. One of the fundamental tools in GSP is called graph Fourier transform (GFT) \cite{LA21}, which provides a frequency interpretation of graph signals.
	
    The GFT of undirected graphs has been extensively studied \cite{JXX24,SM14}. Let $\mathcal{G} = (\mathbf{V},\mathbf{E},\mathbf{A})$ be a weighted (un)directed graph. A traditional definition of the GFT on the undirected graph $\mathcal{G}$ is based on the eigendecomposition of the graph Laplacian $\mathbf{L}$ as follows:
    \begin{equation}\label{f1}
     \mathcal{F}\mathbf{x}:= \mathbf{U}^T\mathbf{x},
    \end{equation}
    where $\mathbf{x}$ denotes a signal on the undirected graph $\mathcal{G}$, and the orthogonal matrix $\mathbf{U} = [\mathbf{u}_0,\ldots,\mathbf{u}_{N-1}]$ is composed of the eigenvectors of $\mathbf{L}$. Here, the eigendecomposition of $\mathbf{L}$ is expressed as
    \begin{equation}
    \mathbf{L} = \mathbf{U}\boldsymbol{\Lambda}\mathbf{U}^T = \sum_{k=0}^{N-1} \lambda_k \mathbf{u}_k \mathbf{u}_k^T,
    \end{equation}
   where $\mathbf{\Lambda}$ is a diagonal matrix composed of eigenvalues $\lambda_k$ arranged in ascending order. Since the Laplacian matrix of a directed graph generally does not have eigendecomposition, the GFT defined in (\ref{f1}) is not applicable to directed graphs. Recently, several methods have been proposed for defining the GFT on directed graph \cite{SRB23,GAR19,CMZ18,ASG20,SS17}.

   For directed graphs, one approach is to perform a Jordan decomposition on the Laplacian operator $\mathbf{L}$ \cite{SM13}:
\begin{equation}
\mathbf{L} = \mathbf{U}\mathbf{J}\mathbf{U}^{-1},
\end{equation}
where $\mathbf{U}$ represents a nonsingular matrix and $\mathbf{J}$ is the Jordan canonical form of $\mathbf{L}$.
Then, the GFT of a signal $\mathbf{x}$ on the directed graph $\mathcal{G}$ is defined as
\begin{equation}
\mathcal{F}\mathbf{x} = \mathbf{U}^{-1}\mathbf{x}.
\end{equation}
However, this decomposition typically suffers from numerical instability, high computational cost, and does not satisfy Parseval's identity.

 Another approach involves constructing a Hermitian Laplacian operator $\mathbf{L}_q$ (with $0\leq q<1$) \cite{FSA20} that is positive semidefinite, thereby enabling its eigendecomposition:
\begin{equation}\label{lqd}
\mathbf{L}_q = \mathbf{U}_q \boldsymbol{\Lambda}_q \mathbf{U}_q^{*},
\end{equation}
where $\mathbf{U}_{q}=[\mathbf{u}_{q,0},\mathbf{u}_{q,1},\cdots,\mathbf{u}_{q,N-1}]$ is orthonormal, $\mathbf{\Lambda}_q={\rm{diag}}([\lambda_{q,0},\lambda_{q,1},\cdots,\\\lambda_{q,N-1}])$, and $^*$ represents conjugate transpose. Here, the eigenvalues of the Hermitian Laplacian matrix $\mathbf{L}_q$ are sorted in the ascending order, which satisfies $0\le \lambda_{q,0}\le \lambda_{q,1}\le\cdots\le \lambda_{q,N-1}$.
Based on $\mathbf{L}_q$ , the GFT of a signal $\mathbf{x}$ on a directed graph $\mathcal{G}$ is given by
\begin{equation}
\mathcal{F}_q \mathbf{x} = \mathbf{U}_q^{*}\mathbf{x}.
\end{equation}

Chen et al \cite{CCS23} proposed a new GFT for directed graphs $\mathcal{G}$ based on the singular value decomposition (SVD) of the Laplacian operator $\mathbf{L}$, expressed as
\begin{equation}
\mathbf{L}=\mathbf{U}\mathbf{\Sigma}\mathbf{V}^T=\sum_{k=0}^{N-1}\sigma_k \mathbf{u}_k \mathbf{v}_k^T.
\end{equation}
This approach offers numerical stability and computational efficiency.

To more effectively extract local features of graph signals, fractional order operators have been integrated into GSP \cite{YL23,GGW23}. Over the past two decades, fractional order has attracted considerable scholarly attention due to its significant applications in signal processing and optics. The graph fractional Fourier domain emerges from the interplay between the graph spectrum domain and fractional Fourier transform domain. However, research on graph fractional Fourier transform (GFRFT) remains relatively limited \cite{YL22,WY124,AKK24}.
For undirected graphs, Wang et al. \cite{WLC17} initially proposed a definition of the GFRFT based on the adjacency matrix, while Wu et al. \cite{WWY20} introduced the spectral graph fractional Fourier transform (SGFRFT) based on the fractional Laplacian matrix. For directed graphs, Yan et al. \cite{YL23} defined a directed graph fractional Fourier transform (DGFRFT) based on the Hermitian fractional Laplacian matrix. Although these transforms demonstrate excellent performance in revealing local features of graph signals, they suffer from limitations in displaying the spectrum of multi-dimensional graph signals, as they neglect the product structure within the graph.
	
Many engineering applications involve spatial-temporal data \cite{YYZ18,ILS16,HYK23,YKA22,LYK23,YPK25} that evolve over networks, such as measurements from sensor arrays collected across time. Such data are naturally modeled as signals on Cartesian product graphs, where the topology is represented by $\mathcal{T}\boxtimes \mathcal{S}$, combining a temporal graph $\mathcal{T}$ (e.g., a directed line graph for time series) with a spatial graph $\mathcal{S}$ (representing sensor connectivity). Although our theoretical framework is developed for general directed graphs $\mathcal{G}_1$ and $\mathcal{G}_2$ to ensure broad applicability, this spatial-temporal paradigm is fundamental to our problem formulation and algorithm design. For multi-dimensional signals defined on undirected Cartesian product graphs, Yan et al. \cite{YL22} proposed two multi-dimensional fractional Fourier transforms (MGFRFTs) based on fractional Laplacian matrix and adjacency matrix, respectively. To the best of our knowledge, there is currently no relevant research on GFRFT defined on directed Cartesian product graphs. Let $\mathcal{G}_{1}$ and $\mathcal{G}_{2}$ be two directed graphs of orders $N_1$ and $N_2$. The main contributions of this paper are as follows:
\begin{itemize}
    \item[(1)] Inspired by prior work \cite{CCL23} in GFT domain, we extend existing results to the fractional domain. Specifically, leveraging SVD of the fractional Laplacian matrix on the Cartesian product graph $\mathcal{G}_{1}\boxtimes\mathcal{G}_{2}$, and directed graphs $\mathcal{G}_{1}$ and $\mathcal{G}_{2}$, we introduce two novel GFRFTs on $\mathcal{G}_{1}\boxtimes\mathcal{G}_{2}$, denoted as $\mathcal{F}_{\boxtimes}^{\alpha}$ and $\mathcal{F}_{\otimes}^{\alpha}$ (see Definitions \ref{Def-mFRFT}, and \ref{Def:GFRFT2}). The GFRFT $\mathcal{F}_{\otimes}^{\alpha}$ and its inverse $\mathcal{F}_{\otimes}^{-\alpha}$ can be efficiently implemented via successive computations along the directions of $\mathcal{G}_{1}$, and $\mathcal{G}_{2}$ (see Algorithms \ref{algo1} and \ref{algo2}). The computational complexities of $\mathcal{F}_{\boxtimes}^{\alpha}$ and $\mathcal{F}_{\otimes}^{\alpha}$ are $\mathcal{O}(N_1^3N_2^3)$ and $\mathcal{O}(N_1^3+N_2^3)$, respectively.
    \item[(2)] For graph signals which are bandlimited in GFRFT domain, we prove that $\mathcal{F}_{\boxtimes}^{\alpha}$ and $\mathcal{F}_{\otimes}^{\alpha}$ both can provide accurate representations of signals exhibiting strong spatiotemporal correlations on the directed Cartesian product graph $\mathcal{G}_{1}\boxtimes\mathcal{G}_{2}$ (see Theorems \ref{Thm-GFRFT1} and \ref{Thm:An2}).
    \item[(3)] We extend the proposed two-dimensional GFRFTs $\mathcal{F}_{\boxtimes}^{\alpha}$ and $\mathcal{F}_{\otimes}^{\alpha}$ to the multi-dimensional case, as detailed in Section \ref{sec5}.
    \item[(4)] Numerical simulations on a meteorological dataset from the Brest region of France demonstrate that the proposed GFRFTs $\mathcal{F}_{\boxtimes}^{\alpha}$ and $\mathcal{F}_{\otimes}^{\alpha}$ exhibit excellent denoising performance, outperforming GFTs $\mathcal{F}_{\boxtimes}$ and $\mathcal{F}_{\otimes}$ proposed in \cite{CCL23} and GFRFT $F_{q}^{\alpha}$ based on the Hermitian fractional Laplacian matrix mentioned in \cite{YL23}.
  \end{itemize}

	The rest of this paper is organized as follows. In Section \ref{sec2}, we review preliminary information on two GFTs defined on directed graphs and GFRFT. In Sections \ref{sec3} and \ref{sec4}, we introduce two new types of GFRFTs $\mathcal{F}_{\boxtimes}^{\alpha}$ and $\mathcal{F}_{\otimes}^{\alpha}$ on Cartesian product graphs $\mathcal{G}=\mathcal{G}_{1}\boxtimes\mathcal{G}_{2}$ with directed graphs
	$\mathcal{G}_{1},\mathcal{G}_{2}$, respectively. Moreover, we prove that our proposed GFRFTs can express a graph signal defined on Cartesian product graph with strong spatiotemporal correlation efficiently.
	In Section \ref{sec5}, we extend the results obtained in Sections \ref{sec3} and \ref{sec4} to a Cartesian graph with $m$ directed graphs. In Section \ref{sec6}, we verify the denoising performance of our proposed GFRFTs
	$\mathcal{F}_{\boxtimes}^{\alpha}$ and $\mathcal{F}_{\otimes}^{\alpha}$ by simulation. In Section \ref{sec7}, we conclude the paper.
	
	\section{Preliminaries}\label{sec2}
	
	In this section, we briefly review some basic concepts of graph signals on directed graphs.
	
	\subsection{Cartesian product graph}
	
	Consider a weighted directed graph $\mathcal{G}=(\mathbf{V},\mathbf{E},\mathbf{A})$, where $\mathbf{V}=\{v_0,v_1,\cdots,v_{N-1}\}$ is the set of vertices with $N$ nodes in the graph, $\mathbf{E}$ is a set of edges with $\mathbf{E}=\{(i,j)|i,j\in \mathbf{V},j\thicksim i\}\subseteq \mathbf{V}\times \mathbf{V}$, and $\mathbf{A}$
	is the weighted adjacency matrix of the graph with entry $\mathbf{A}_{mn}=a_{mn}$ denotes the weight of the edge between two vertices $v_m$ and $v_n$.
	
	Given two directed graphs $\mathcal{G}_{1}=(\mathbf{V}_{1},\mathbf{E}_{1},\mathbf{A}_1)$ and $\mathcal{G}_{2}=(\mathbf{V}_{2}, \mathbf{E}_{2},\mathbf{A}_2)$, then
	$\mathcal{G}:=\mathcal{G}_{1}\boxtimes\mathcal{G}_{2}$ \cite{CCL23} represents the Cartesian product graph with vertex set $\mathbf{V}_{1}\times \mathbf{V}_{2}$, where the number of nodes in $\mathbf{V}_1$ and
	$\mathbf{V}_2$ are $N_1$ and $N_2$, respectively. The edge set of $\mathcal{G}_{1}\boxtimes\mathcal{G}_{2}$ satisfies
	\[
	[\{v_{1},\widetilde{v}_{1}\}\in \mathbf{E}_{1}, v_{2}=\widetilde{v}_{2}] \quad \mbox{or}\quad [v_{1}=\widetilde{v}_{1},\{v_{2},\widetilde{v}_{2}\}\in\mathbf{E}_{2}].
	\]
	Figure \ref{Cartesian} illustrates an example of a directed Cartesian product graph in GSP, where sensor network measurements are modeled as the Cartesian product of sensor network and time series. We assume that sensor work is a cycle graph, and time series is a path graph.

     \begin{figure}[htbp]
				\includegraphics[width=1\linewidth]{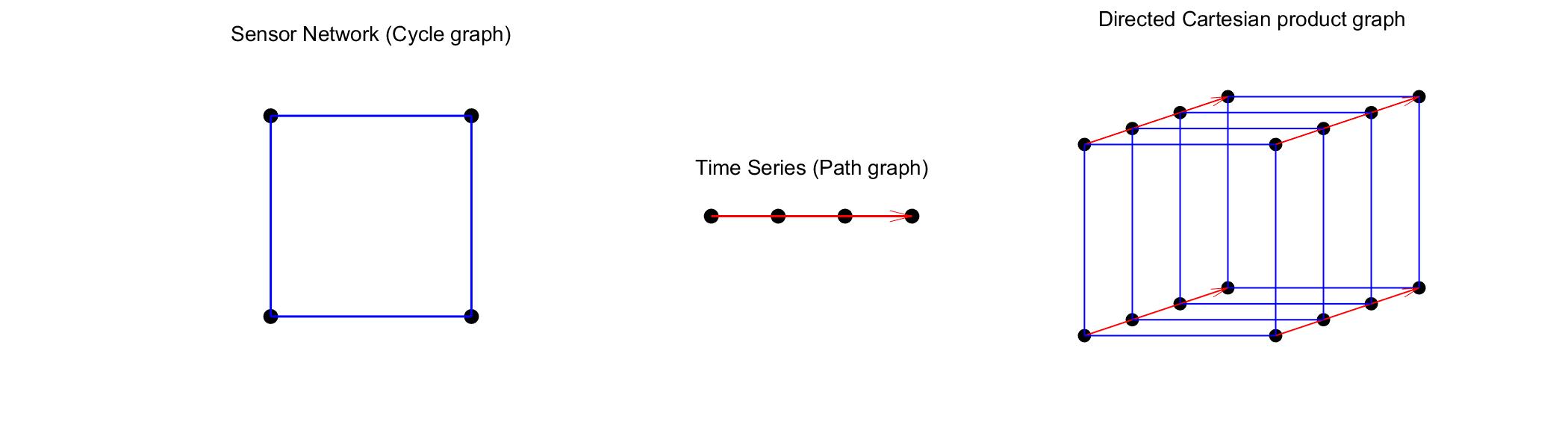}
		\centering
       \vspace {-1.0em}
		\caption{A directed Cartesian product graph.}\label{Cartesian}
	\end{figure}
	
For $l=1,2$, we define the degree matrix and the Laplacian matrix of graph $\mathcal{G}_{l}$ by $\mathbf{D}_{l}$ and $\mathbf{L}_{l}=\mathbf{D}_{l}-\mathbf{A}_{l}$, respectively. Then, the adjacency matrix
	$\mathbf{A}_{\boxtimes}$ and the Laplacian matrix $\mathbf{L}_{\boxtimes}$ of the Cartesian product graph $\mathcal{G}_{1}\boxtimes\mathcal{G}_{2}$ can be expressed as
	\begin{equation}\label{chuwz:1}
		\mathbf{A}_{\boxtimes}:=\mathbf{A}_1\oplus\mathbf{A}_2=\mathbf{A}_{1}\otimes\mathbf{I}_{N_{2}}+\mathbf{I}_{N_{1}}\otimes\textbf{A}_{2},
	\end{equation}
	and
	\begin{equation}\label{chuwz:2}
		\mathbf{L}_{\boxtimes}:=\mathbf{L}_1\oplus\mathbf{L}_2=\mathbf{L}_{1}\otimes\mathbf{I}_{N_{2}}+\mathbf{I}_{N_{1}}\otimes\mathbf{L}_{2},
	\end{equation}
	respectively. Here, the operator $\oplus$ represents the Kronecker sum, the operator $\otimes$ means the Kronecker product, and $\mathbf{I}_{N_i}$ denotes
	the identity matrix of size $N_i,\;i=1,2$.
	
	In the rest of this paper, we use an $N_2\times N_1$ matrix $\mathbf{X}=[\mathbf{x}_i]_{i\in \mathbf{V}_1}$ or its vectorization $\mathbf{x}={\rm{vec}}(\mathbf{X})$ to represent a signal defined on a Cartesian product graph $\mathcal{G}_{1}\boxtimes\mathcal{G}_{2}$, where $\mathbf{x}_i$ is a graph signal on $\mathcal{G}_2$, for all $i\in \mathbf{V}_1$. Alternatively, it can also be represented by a matrix $\mathbf{Y}=[\mathbf{y}_j^T]_{j\in \mathbf{V}_2}$, where $\mathbf{y}_j$ is a graph signal on $\mathcal{G}_1$, for all $j\in \mathbf{V}_2$. For spatial-temporal data, $\mathbf{x}_i$ is the spatial signal at time $i\in \mathbf{V}_1$ and $\mathbf{y}_j$ is the temporal signal at vertex $j\in \mathbf{V}_2$.

	\subsection{Graph Fourier transform on a directed Cartesian product graph}
	
	Most methods of defining GFT is essentially by decomposing a general graph shift operator. Cheng et al \cite{CCL23} defined two GFTs on directed Cartesian product graph by performing SVDs on the graph Laplacian matrices.
	
	For two directed graphs $\mathcal{G}_{1}=(\mathbf{V}_{1},\mathbf{E}_{1},\mathbf{A}_1)$ and $\mathcal{G}_{2}=(\mathbf{V}_{2}, \mathbf{E}_{2},\mathbf{A}_2)$, consider the directed Cartesian product graph
	$\mathcal{G}_{1}\boxtimes\mathcal{G}_{2}$. Assume that the singular values of the Laplacian matrix $\mathbf{L}_{\boxtimes}$ are sorted in a nondecreasing order
	$0=\sigma_{0}\leq\sigma_{1}\leq\cdots\leq\sigma_{N-1}$ with $N=N_{1}N_{2}$. Then, the SVD of the Laplacian matrix $\mathbf{L}_{\boxtimes}$ is a factorization of the form
	\begin{equation}\label{chuwz:3}
		\mathbf{L}_{\boxtimes}=\mathbf{U}_{\boxtimes}\mathbf{\Sigma}\mathbf{V}_{\boxtimes}^{\mathbf{T}}=\sum_{k=0}^{N-1}\sigma_{k}\mathbf{u}_{k}\mathbf{v}_{k}^{\mathbf{T}},
	\end{equation}
	where $\mathbf{\Sigma}={\rm{diag}}([\sigma_{0},\sigma_{1},\cdots,\sigma_{N-1}])$,  $\mathbf{U}_{\boxtimes}=[\mathbf{u}_{0},\mathbf{u}_{1},\cdots,\mathbf{u}_{N-1}]$ and $\mathbf{V}_{\boxtimes}=[\mathbf{v}_{0},\mathbf{v}_{1},\cdots,\\\mathbf{v}_{N-1}]$ are
	both orthogonal. Then, we can obtain the definition of GFT $\mathcal{F}_{\boxtimes}$ on directed Cartesian product graph $\mathcal{G}_{1}\boxtimes\mathcal{G}_{2}$ as follows:
	
\begin{Definition}\cite[Definition II.1]{CCL23}
	Given two directed graphs $\mathcal{G}_{1}$ and $\mathcal{G}_{2}$. Let $\mathbf{x}\in \mathbb{R}^N$ be a graph signal defined on a Cartesian product graph $\mathcal{G}_{1}\boxtimes\mathcal{G}_{2}$. Then, the GFT
	$\mathcal{F}_{\boxtimes}: \mathbb{R}^N\longmapsto \mathbb{R}^{2N}$ on $\mathcal{G}_{1}\boxtimes\mathcal{G}_{2}$ is given by
	\begin{eqnarray}\label{chuwz:4}
		\mathcal{F}_{\boxtimes}\mathbf{x}:=\dfrac{1}{2}\left(\begin{matrix}(\mathbf{U}_{\boxtimes}+\mathbf{V}_{\boxtimes})^{T}\mathbf{x}\\
			(\mathbf{U}_{\boxtimes}-\mathbf{V}_{\boxtimes})^{T}\mathbf{x} \end{matrix}\right)
		=\frac{1}{2}\left(\begin{matrix}(\mathbf{u}_{0}+\mathbf{v}_{0})^{T}\mathbf{x}\\
			\vdots\\
			(\mathbf{u}_{N-1}+\mathbf{v}_{N-1})^{T}\mathbf{x}\\
			(\mathbf{u}_{0}-\mathbf{v}_{0})^{T}\mathbf{x}\\
			\vdots\\
			(\mathbf{u}_{N-1}-\mathbf{v}_{N-1})^{T}\mathbf{x}\end{matrix}\right),
	\end{eqnarray}
		where $\mathbf{U}_{\boxtimes}$ and $\mathbf{V}_{\boxtimes}$ are defined the same as in $(\ref{chuwz:3})$. Moreover, for all
		$\mathbf{z}_{l}=[z_{l,0},z_{l,1},\cdots,\\z_{l,N-1}]^{T},\;l=1,2,$ the inverse GFT
		$\mathcal{F}_{\boxtimes}^{-1}:\mathbb{R}^{2N}\longmapsto \mathbb{R}^{N}$ is denoted as
		\begin{align}\label{chuwz:5}
			\mathcal{F}_{\boxtimes}^{-1}\left(\begin{matrix}\mathbf{z}_{1}\\ \mathbf{z}_{2}\end{matrix}\right)
			&:=\frac{1}{2}[\mathbf{U}_{\boxtimes}(\mathbf{z}_{1}+\mathbf{z}_{2})+\mathbf{V}_{\boxtimes}(\mathbf{z}_{1}-\mathbf{z}_{2})]\nonumber\\
			&=\frac{1}{2}\sum_{k=0}^{N-1}(z_{1,k}+z_{2,k})\mathbf{u}_{k}+(z_{1,k}-z_{2,k})\mathbf{v}_{k}.
		\end{align}
	\end{Definition}
	
	By performing SVDs on the Laplacian matrices $\mathbf{L}_{1}$ and $\mathbf{L}_{2}$ on two directed graphs $\mathcal{G}_{1}$ and $\mathcal{G}_{2}$, respectively, Cheng et al \cite{CCL23} also proposed another definition of GFT on a directed Cartesian product graph $\mathcal{G}_{1}\boxtimes\mathcal{G}_{2}$. Suppose that the singular values of Laplacian matrix $\mathbf{L}_l$ are sorted in a nondecreasing order
	$0=\sigma_{l,0}\leq\sigma_{l,1}\leq\cdots\leq\sigma_{l,N_l-1},\;\mbox{for}\;l=1,2$. Based on SVDs, the Laplacian matrices $\mathbf{L}_l,\;l=1,2$ can be decomposed as
	\begin{equation}\label{chuwz:6}
		\mathbf{L}_{l}=\mathbf{U}_{l}\mathbf{\Sigma}_{l}\mathbf{V}_{l}^{T}=\sum_{i=0}^{N_{l}-1}\sigma_{l,i}\mathbf{u}_{l,i}\mathbf{v}_{l,i}^{T},
	\end{equation}
	where $\mathbf{U}_{l}=[\mathbf{u}_{l,0},\mathbf{u}_{l,1},\cdots,\mathbf{u}_{l,N_{l}-1}]$ and $\mathbf{V}_{l}=[\mathbf{v}_{l,0},\mathbf{v}_{l,1},\cdots,\mathbf{v}_{l,N_{l}-1}]$
	are both orthogonal, $\mathbf{\Sigma}_l={\rm{diag}}([\sigma_{l,0},\sigma_{l,1},\cdots,\sigma_{l,{N_l}-1}])$. Denote
	\begin{equation}\label{gft1}
		\mathbf{U}_{\otimes}=\mathbf{U}_{1}\otimes\mathbf{U}_{2},\;\;\mathbf{V}_{\otimes}=\mathbf{V}_{1}\otimes\mathbf{V}_{2}.
	\end{equation}
	
	Then, another definition of GFT $\mathcal{F}_{\otimes}$ on a directed Cartesian product graph $\mathcal{G}_{1}\boxtimes\mathcal{G}_{2}$ is defined in the following:
	
	\begin{Definition}\cite[Definition III.1]{CCL23}
		For two directed graphs $\mathcal{G}_{1}$ and $\mathcal{G}_{2}$, assume that $\mathbf{x}\in \mathbb{R}^N$ is a graph signal defined on the Cartesian product graph $\mathcal{G}_{1}\boxtimes\mathcal{G}_{2}$.
		Then, the GFT $\mathcal{F}_{\otimes}: \mathbb{R}^N\longmapsto \mathbb{R}^{2N}$ on $\mathcal{G}_{1}\boxtimes\mathcal{G}_{2}$ can be represented as
		\begin{equation}\label{chuwz:7}
			\mathcal{F}_{\otimes}\mathbf{x}:=\frac{1}{2}\left(\begin{matrix}(\mathbf{U}_{\otimes}+\mathbf{V}_{\otimes})^{T}\mathbf{x}\\
				(\mathbf{U}_{\otimes}-\mathbf{V}_{\otimes})^{T}\mathbf{x} \end{matrix}\right),
		\end{equation}
		where $\mathbf{U}_{\otimes}$ and $\mathbf{V}_{\otimes}$ are defined the same as in $(\ref{gft1})$. Furthermore, for all $\mathbf{z}_{l}=[z_{l,0},z_{l,1},\cdots,z_{l,N-1}]^{T},\;l=1,2,$
		the inverse GFT
		$\mathcal{F}_{\otimes}^{-1}:\mathbb{R}^{2N}\longmapsto \mathbb{R}^{N}$ can be expressed as
		\begin{equation}\label{chuwz:8}
			\mathcal{F}_{\otimes}^{-1}\left(\begin{matrix}\mathbf{z}_{1}\\
				\mathbf{z}_{2}\end{matrix}\right):=\frac{1}{2}[\mathbf{U}_{\otimes}(\mathbf{z}_{1}+\mathbf{z}_{2})+\mathbf{V}_{\otimes}(\mathbf{z}_{1}-\mathbf{z}_{2})].
		\end{equation}
	\end{Definition}
	
	\subsection{Graph fractional Fourier Transform on a directed graph}\label{2.3}
	In this subsection, we review the concept of spectral graph fractional Fourier transform for a directed graph $\mathcal{G}=(\mathbf{V},\mathbf{E},\mathbf{A})$ (DGFRFT).
	
	Yan et al \cite{YL23} have extended the concept of Hermitian Laplacian matrix (\ref{lqd}) to fractional order. The graph Hermitian fractional Laplacian matrix of a directed graph $\mathcal{G}$
	is defined by
	\[
	\mathbf{L}_{q}^{\alpha}=\mathbf{P}_{q}\mathbf{\Upsilon}_{q}\mathbf{P}_{q}^*,
	\]
	with $0<\alpha\le 1$,
	\begin{equation}\label{eq:p}
		\mathbf{P}_{q}=[\mathbf{p}_{q,0},\mathbf{p}_{q,1},\cdots,\mathbf{p}_{q,N-1}]=\mathbf{U}_{q}^{\alpha}
	\end{equation}
	is an orthogonal matrix, and
	\begin{equation}
		\boldsymbol{\Upsilon}_{q}={\rm{diag}}([\mathbf{\varphi}_{q,0},\mathbf{\varphi}_{q,1},\cdots,\mathbf{\varphi}_{q,N-1}])=\mathbf{\Lambda}_{q}^{\alpha},
	\end{equation}
	i.e.,
	\begin{equation}
		\mathbf{\varphi}_{q,i}=\lambda_{q,i}^{\alpha},\;\mbox{for all}\; i=0,1,\cdots,N-1.
	\end{equation}
 In the subsequent sections of this paper, calculating the $\alpha$ power of a matrix refers to the matrix power function. This definition aims to construct a GFRFT basis using a new function $\mathbf{p}_{q,l}$, instead of sharing the same function as the GFT basis $\mathbf{u}_{q,l}$. By employing different basis functions, we ensure that GFRFT and GFT exhibit distinct characteristics.

	Then, Yan et al \cite{YL23} proposed a definition of DGFRFT for a directed graph $\mathcal{G}$ as follows.
	\begin{Definition}\cite[Definition 2]{YL23}\label{Def:DGFRFT}
		For any signal $\mathbf{x}$ defined on a directed graph $\mathcal{G}$, the DGFRFT is denoted as
		\begin{equation}\label{Def:DGFRFT2}
			\mathcal{F}_{q}^\alpha \mathbf{x}=\mathbf{P}_{q}^*\mathbf{x},
		\end{equation}
		where $\mathbf{P}_{q}$ is defined the same as in $(\ref{eq:p})$. Moreover, the inverse DGFRFT is defined by
		\begin{equation}\label{Def:DGFRFT3}
			\mathbf{x}=\mathbf{P}_{q}(\mathcal{F}_{q}^{\alpha}\mathbf{x}).
		\end{equation}
	\end{Definition}
	
	\section{SVD-Based GFRFT on Directed Cartesian Product Graph}\label{sec3}
	
GFTs $\mathcal{F}_{\boxtimes}$ and $\mathcal{F}_{\otimes}$ proposed in \cite{CCL23} are integer-order transforms, providing only two perspectives: 0 (time domain) or 1 (frequency domain). It lacks intermediate states, cannot characterize signal evolution in the mixed graph-frequency domain, nor flexibly adjust time-frequency resolution. Therefore, in this section, we introduce a GFRFT $\mathcal{F}_{\boxtimes}^{\alpha}$ on a directed Cartesian product graph $\mathcal{G}$ and prove in Theorem \ref{Thm-GFRFT1} that graph signals with strong spatiotemporal correlation concentrate their energy primarily in the low-frequency components of the proposed GFRFT.
	
	In the following, we consider a directed Cartesian product graph $\mathcal{G}_{1}\boxtimes\mathcal{G}_{2}$, where $\mathcal{G}_{1}=(\mathbf{V}_{1},\mathbf{E}_{1},\mathbf{A}_1)$ and $\mathcal{G}_{2}=(\mathbf{V}_{2}, \mathbf{E}_{2},\mathbf{A}_2)$ are two directed graphs. The SVDs of their Laplacian matrices $\mathbf{L}_l,\;l=1,2$ can be represented as
	\[
	\mathbf{L}_{l}=\mathbf{U}_{l}\mathbf{\Sigma}_{l}\mathbf{V}_{l}^{T},\; l=1,2,
	\]
	where $\mathbf{U}_{l},\;\mathbf{V}_{l},\;\mathbf{\Sigma}_{l}$ are defined the same as those in (\ref{chuwz:6}).
	
	For $0<\alpha\le 1$, the graph fractional Laplacian matrices $\mathbf{L}_{l}^{\alpha},l=1,2$ can be defined as:
	\begin{equation}\label{eq:La}
		\mathbf{L}^{\alpha}_{l}=\mathbf{P}_{l}\mathbf{R}_{l}\mathbf{Q}^{T}_{l},
	\end{equation}
	where
	\begin{align*}
		\mathbf{P}_{l}=[\mathbf{p}_{l,0},\mathbf{p}_{l,1},\cdots,\mathbf{p}_{l,{N_l}-1}]=\mathbf{U}_{l}^{\alpha},\quad
		\mathbf{Q}_{l}=[\mathbf{q}_{l,0},\mathbf{q}_{l,1},\cdots,\mathbf{q}_{l,{N_l}-1}]=\mathbf{V}_{l}^{\alpha},
	\end{align*}
	and
	\begin{equation*}
		\mathbf{R}_{l}={\rm{diag}}([r_{l,0},r_{l,1},\cdots,r_{l,N_l-1}])=\mathbf{\Sigma}^{\alpha}_{l},
	\end{equation*}
	which satisfies
	\begin{equation*}
		r_{l,i}=\sigma_{l,i}^{\alpha},\;i=0,1,\cdots,N_l-1.
	\end{equation*}
	
	Then, we define the graph fractional Laplacian matrix $\mathbf{L}_{\boxtimes}^{\alpha}$ for a directed Cartesian product graph $\mathcal{G}_{1}\boxtimes\mathcal{G}_{2}$ as
	\begin{equation}\label{def:frac}
		\mathbf{L}_{\boxtimes}^{\alpha}:=\mathbf{L}_1^{\alpha}\oplus\mathbf{L}_2^{\alpha}=\mathbf{L}_1^{\alpha}\otimes \mathbf{I}_{N_2}+\mathbf{I}_{N_1}\otimes\mathbf{L}_2^{\alpha}.
	\end{equation}
	
	By taking SVD on $\mathbf{L}_{\boxtimes}^{\alpha}$, it can be rewritten as
	\begin{equation}\label{chuwz:9}
		\mathbf{L}_{\boxtimes}^{\alpha}=\mathbf{P}_{\boxtimes}\mathbf{R}\mathbf{Q}_{\boxtimes}^{T}
		=\sum_{k=0}^{N-1}r_{k}\mathbf{p}_{k}\mathbf{q}_{k}^{T},
	\end{equation}
	where $N=N_1N_2$, two matrices
	\[
	\mathbf{P}_{\boxtimes}=[\mathbf{p}_{0},\mathbf{p}_{1},\cdots,\mathbf{p}_{N-1}],\;
	\mathbf{Q}_{\boxtimes}=[\mathbf{q}_{0},\mathbf{q}_{1},\cdots,\mathbf{q}_{N-1}]
	\]
	are orthonormal,
	\begin{equation*}
		\mathbf{R}={\rm{diag}}([r_{0},r_{1},\cdots,r_{N-1}]),
	\end{equation*}
	which satisfies $0=r_{0}\le r_{1}\le\cdots\le r_{N-1}$.
	The time complexity for computing the SVD factorization of $\mathbf{L}_{\boxtimes}^{\alpha}$ is $\mathcal{O}(N^{3})$.
	
	In particular, for the undirected graph case, that is, $\mathcal{G}_{1}$ and $\mathcal{G}_{2}$ are undirected graphs. Then, the graph fractional Laplacian matrices
	$\mathbf{L}_{l}^{\alpha},\; l=1,2$ are positive semi-definite, and can be represented as
	\begin{equation}\label{chuwz:10}
		\mathbf{L}_{l}^{\alpha}=\sum_{i=0}^{N_{l}-1}\rho_{l,i}\mathbf{k}_{l,i}\mathbf{k}_{l,i}^T,\; l=1,2,
	\end{equation}
	where $\{\rho_{l,i}\}_{i=0}^{N_{l}-1}$ are the eigenvalues of $\mathbf{L}_{l}^{\alpha}$ in an ascending order, and $\{\mathbf{k}_{l,i}\}_{i=0}^{N_{l}-1}$ are the eigenvectors. From matrix theory, it is known that the eigenvalues of the graph fractional Laplacian matrix $\mathbf{L}_{\boxtimes}^{\alpha}$ on an undirected Cartesian product graph $\mathcal{G}_{1}\boxtimes\mathcal{G}_{2}$ are equal to the sum of the eigenvalues of $\mathbf{L}_{1}^{\alpha}$ and $\mathbf{L}_{2}^{\alpha}$, and $\mathbf{P}_{\boxtimes}=\mathbf{Q}_{\boxtimes}$ is the Kronecker product of eigenfunctions of fractional Laplacian matrices $\mathbf{L}_{1}^{\alpha}$ and $\mathbf{L}_{2}^{\alpha}$, that is to say,
	\begin{equation}\label{chuwz:11}
		\mathbf{L}_{\boxtimes}^{\alpha}=\sum_{i=0}^{N_{1}-1}\sum_{j=0}^{N_{2}-1}(\rho_{1,i}+\rho_{2,j})(\mathbf{k}_{1,i}\otimes \mathbf{k}_{2,j})
		(\mathbf{k}_{1,i}\otimes \mathbf{k}_{2,j})^{T}.
	\end{equation}
	
	The computational complexity of performing the eigenvalue decomposition of the fractional Laplacian $\mathbf{L}_{\boxtimes}^{\alpha}$ is $\mathcal{O}(N_{1}^{3}+N_{2}^{3})$.
	
	Next, we extend the definition of GFT (\ref{chuwz:4}) introduced in \cite{CCL23} to the fractional order $\alpha$, where $0<\alpha\le 1$. The parameter $\alpha$ makes our GFRFT more flexible without extra computational cost.
	
\begin{Definition}\label{Def-mFRFT}
	Assume that $\mathcal{G}=\mathcal{G}_{1}\boxtimes\mathcal{G}_{2}$ is a Cartesian product graph of two directed graphs $\mathcal{G}_{1}$ and $\mathcal{G}_{2}$, the fractional Laplacian matrix $\mathbf{L}_{\boxtimes}^{\alpha}$ on $\mathcal{G}$ is defined the same as $(\ref{def:frac})$, and has the SVD form as in $(\ref{chuwz:9})$, $\alpha$ is the fractional order, which satisfies $0<\alpha\le 1$. The GFRFT $\mathcal{F}_{\boxtimes}^\alpha$ of a signal $\mathbf{x}:\mathbf{V}_1\times \mathbf{V}_2 \rightarrow \mathbb{R}^{N}$ on $\mathcal{G}$ is given by
	\begin{eqnarray}\label{chuwz:12}
		\mathcal{F}_{\boxtimes}^\alpha\mathbf{x} :=\dfrac{1}{2}\left(\begin{matrix}(\mathbf{P}_{\boxtimes}+\mathbf{Q}_{\boxtimes})^{T}\mathbf{x}\\
			(\mathbf{P}_{\boxtimes}-\mathbf{Q}_{\boxtimes})^{T}\mathbf{x} \end{matrix}\right)
		=\frac{1}{2}\left(\begin{matrix}(\mathbf{p}_{0}+\mathbf{q}_{0})^{T}\mathbf{x}\\
			\vdots\\
			(\mathbf{p}_{N-1}+\mathbf{q}_{N-1})^{T}\mathbf{x}\\
			(\mathbf{p}_{0}-\mathbf{q}_{0})^{T}\mathbf{x}\\
			\vdots\\
			(\mathbf{p}_{N-1}-\mathbf{q}_{N-1})^{T}\mathbf{x}\end{matrix}\right).
	\end{eqnarray}
		The inverse GFRFT $\mathcal{F}_{\boxtimes}^{-\alpha}$ is defined as
		\begin{align}
			\mathcal{F}_{\boxtimes}^{-\alpha}\left(\begin{matrix}\mathbf{y}_{1}\\ \mathbf{y}_{2}\end{matrix}\right)
			&:=\frac{1}{2}\left[\mathbf{P}_{\boxtimes}(\mathbf{y}_{1}+\mathbf{y}_{2})+\mathbf{Q}_{\boxtimes}(\mathbf{y}_{1}-\mathbf{y}_{2})\right]\nonumber\\
			&=\frac{1}{2}\sum_{i=0}^{N-1}\left[(y_{1,i}+y_{2,i})\mathbf{p}_{i}+(y_{1,i}
			-y_{2,i})\mathbf{q}_{i}\right],\label{chuwz:13}
		\end{align}
		for all $\mathbf{y}_l=[y_{l,0},y_{l,1},\cdots,y_{l, N-1}]^{T}\in \mathbb{R}^N,\;l=1,2.$
	\end{Definition}
	
	For a signal $\mathbf{x}$ on a Cartesian product graph $\mathcal{G}$, it is easy to prove that
	\begin{equation}\label{chuwz:15}
		\mathcal{F}_{\boxtimes}^{-\alpha}[\mathcal{F}_{\boxtimes}^\alpha\mathbf{x}]=\mathbf{x},
	\end{equation}
	and
	\begin{equation}\label{chuwz:16}
		\|\mathcal{F}_{\boxtimes}^\alpha\mathbf{x}\|_{2}=\|\mathbf{x}\|_{2},\;\; {\mbox{for all}}\;\; \mathbf{x}\in \mathbb{R}^N.
	\end{equation}
	
	When $\mathcal{G}_{1}$ and $\mathcal{G}_{2}$ are two undirected graphs, Yan et al $\cite{YL22}$ proposed a Laplacian-based multi-dimensional GFRFT $\mathcal{F}_{\alpha}$
	of a signal $\mathbf{X}$ as
	\begin{equation}\label{chuwz:14}
		\mathcal{F}_{\alpha}\mathbf{X}=\mathbf{K}_{2}\mathbf{X}\mathbf{K}_{1}^{T},
	\end{equation}
	where $\mathbf{K}_{l},\;l=1,2$ are the orthonormal matrices by taking eigenvalue decomposition (\ref{chuwz:10}) on the fractional Laplacian $\mathbf{L}_{l}^{\alpha},\;l=1,2$. Note that $\mathbf{P}_{\boxtimes}=\mathbf{Q}_{\boxtimes}=\mathbf{K}_{1}\otimes\mathbf{K}_{2}$ in $(\ref{chuwz:9})$. Then, we can easily obtain
	\begin{equation*}
		\mathcal{F}_{\boxtimes}^\alpha[{\rm{vec}}({\mathbf{X}})]=\left(\begin{matrix}{\rm{vec}}(\mathcal{F}_{\alpha}\mathbf{X})\\
			\mathbf{0}\end{matrix}\right).
	\end{equation*}
	Therefore, for undirected graphs, our new GFRFT $\mathcal{F}_{\boxtimes}^\alpha$ in (\ref{chuwz:12}) is essentially consistent with Laplacian based multi-dimensional GFRFT in \cite{YL22}.
	
	\begin{Remark}
		When $\alpha=1$, the GFRFT $\mathcal{F}_{\boxtimes}^\alpha$ $(\ref{chuwz:12})$ reduces to GFT $\mathcal{F}_{\boxtimes}$ $(\ref{chuwz:4})$ mentioned in $\cite{CCL23}$. Hence, our GFRFT $\mathcal{F}_{\boxtimes}^\alpha$ is a natural extension from GFT domain to fractional order.
	\end{Remark}
	
	Motivated by \cite{CCL23}, we consider the singular values $r_i,0\le i\le N-1$ as frequencies of the GFRFT $\mathcal{F}_{\boxtimes}^\alpha$, and $\mathbf{p}_{k},\mathbf{q}_{k},\;0\le k\le N-1$, as the left and right frequency components, respectively. Then, we demonstrate that the energy of signals defined on a directed Cartesian product graph $\mathcal{G}$ with strong spatiotemporal correlation mainly concentrated in the low frequencies of GFRFT $\mathcal{F}_{\boxtimes}^\alpha$.
	
	\begin{Theorem}\label{Thm-GFRFT1}
		Assume that $\mathcal{G}=\mathcal{G}_{1}\boxtimes\mathcal{G}_{2}$ is a Cartesian product graph of two directed graphs $\mathcal{G}_{1}$ and $\mathcal{G}_{2}$, the fractional Laplacian matrix $\mathbf{L}_{\boxtimes}^{\alpha}$ on $\mathcal{G}$ is defined the same as $(\ref{def:frac})$, and $\mathbf{p}_i,\;\mathbf{q}_i,\;r_i,\;0\le i\le N-1$ are the same as in $(\ref{chuwz:9})$, $\alpha$ is the fractional order, which satisfies $0<\alpha\le 1$. Let $\Omega\in\{1,2,\cdots,N\}$ be the frequency bandwidth of the GFRFT $\mathcal{F}_{\boxtimes}^\alpha$ in $(\ref{chuwz:12})$, and the low frequency component of a signal $\mathbf{x}$ on $\mathcal{G}$ be
		\begin{eqnarray}\label{chuwz:17}
			\mathbf{x}_{\Omega,\boxtimes}^{\alpha}&:=&\frac{1}{2}\sum_{i=0}^{\Omega-1}[(y_{1,i}+y_{2,i})\mathbf{p}_{i}+(y_{1,i}-y_{2,i})\mathbf{q}_{i}]\nonumber\\
			&=&\frac{1}{2}\sum_{i=0}^{\Omega-1}(\mathbf{p}_{i}\mathbf{p}_{i}^{T}+\mathbf{q}_{i}\mathbf{q}_{i}^{T})\mathbf{x},
		\end{eqnarray}
		where
		\[
		y_{1,i}:=\frac{(\mathbf{p}_{i}+\mathbf{q}_{i})^{T}\mathbf{x}}{2},\;\; y_{2,i}:=\frac{(\mathbf{p}_{i}-\mathbf{q}_{i})^{T}\mathbf{x}}{2},
		\]
		for all $0\leq i\leq\Omega-1$.
		Then, we have
		\begin{align}
			\|\mathbf{x}-\mathbf{x}_{\Omega,\boxtimes}^{\alpha}\|_{2}\le&\frac{1}{2r_{\Omega-1}}(\|\mathbf{L}_{\boxtimes}^{\alpha}\mathbf{x}\|_{2}
			+\|(\mathbf{L}_{\boxtimes}^{\alpha})^{T}\mathbf{x}\|_{2})\nonumber\\
			\le&\frac{1}{2r_{\Omega-1}}\left[\|(\mathbf{L}_{1}^{\alpha}\otimes\mathbf{I}_{N_{2}})
			\mathbf{x}\|_{2}+\|((\mathbf{L}_{1}^{\alpha})^{T}\otimes\mathbf{I}_{N_{2}})\mathbf{x}\|_{2}\right.\nonumber\\
			&\quad \left.+\|(\mathbf{I}_{N_{1}}\otimes\mathbf{L}_{2}^{\alpha})\mathbf{x}\|_{2}
			+\|(\mathbf{I}_{N_{1}}\otimes(\mathbf{L}_{2}^{\alpha})^{T})\mathbf{x}\|_{2}\right],\label{chuwz:18}
		\end{align}
		where $r_{\Omega-1}$ is the cutoff frequency.
	\end{Theorem}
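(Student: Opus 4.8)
The plan is to exploit the reconstruction identity $(\ref{chuwz:15})$, which shows that $\mathcal{F}_{\boxtimes}^{-\alpha}\mathcal{F}_{\boxtimes}^\alpha$ is the identity, and to recognize the residual $\mathbf{x}-\mathbf{x}_{\Omega,\boxtimes}^{\alpha}$ as the high-frequency tail of the reconstruction formula. First I would use $(\ref{chuwz:15})$ together with the relations $y_{1,i}+y_{2,i}=\mathbf{p}_i^T\mathbf{x}$ and $y_{1,i}-y_{2,i}=\mathbf{q}_i^T\mathbf{x}$ to write $\mathbf{x}=\frac{1}{2}\sum_{i=0}^{N-1}(\mathbf{p}_i\mathbf{p}_i^T+\mathbf{q}_i\mathbf{q}_i^T)\mathbf{x}$. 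Subtracting the low-frequency component $(\ref{chuwz:17})$ then yields the clean expression
$$\mathbf{x}-\mathbf{x}_{\Omega,\boxtimes}^{\alpha}=\frac{1}{2}\sum_{i=\Omega}^{N-1}\left[(\mathbf{p}_i^T\mathbf{x})\mathbf{p}_i+(\mathbf{q}_i^T\mathbf{x})\mathbf{q}_i\right],$$
so the whole problem reduces to bounding a tail sum over the frequencies $i\ge\Omega$.

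Next I would split the tail into its left and right parts, apply the triangle inequality, and use the orthonormality of $\mathbf{P}_{\boxtimes}$ and $\mathbf{Q}_{\boxtimes}$ to evaluate each piece as a sum of squared coefficients, e.g. $\|\sum_{i\ge\Omega}(\mathbf{p}_i^T\mathbf{x})\mathbf{p}_i\|_2^2=\sum_{i\ge\Omega}(\mathbf{p}_i^T\mathbf{x})^2$. The crucial step is then to reinsert the singular values: from the SVD $(\ref{chuwz:9})$ one reads off $\|(\mathbf{L}_{\boxtimes}^{\alpha})^T\mathbf{x}\|_2^2=\sum_{k}r_k^2(\mathbf{p}_k^T\mathbf{x})^2$ and $\|\mathbf{L}_{\boxtimes}^{\alpha}\mathbf{x}\|_2^2=\sum_{k}r_k^2(\mathbf{q}_k^T\mathbf{x})^2$. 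Since the singular values are nondecreasing, $r_i\ge r_{\Omega-1}$ for every $i\ge\Omega$, whence $\sum_{i\ge\Omega}(\mathbf{p}_i^T\mathbf{x})^2\le r_{\Omega-1}^{-2}\sum_{k}r_k^2(\mathbf{p}_k^T\mathbf{x})^2$, and likewise for the $\mathbf{q}$-coefficients. Taking square roots and recombining through the triangle inequality yields the first inequality in $(\ref{chuwz:18})$.

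The point that requires care — the only genuine subtlety in an otherwise routine argument — is the cross-matching forced by the non-symmetry of the SVD: the left-vector part $\sum(\mathbf{p}_i^T\mathbf{x})\mathbf{p}_i$ is controlled by $\|(\mathbf{L}_{\boxtimes}^{\alpha})^T\mathbf{x}\|_2$ (because $(\mathbf{L}_{\boxtimes}^{\alpha})^T=\sum_k r_k\mathbf{q}_k\mathbf{p}_k^T$ pairs the coefficient $\mathbf{p}_k^T\mathbf{x}$ with the orthonormal vector $\mathbf{q}_k$), whereas the right-vector part is controlled by $\|\mathbf{L}_{\boxtimes}^{\alpha}\mathbf{x}\|_2$; reversing this pairing would break the estimate. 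Finally, for the second inequality I would substitute the Kronecker-sum form $(\ref{def:frac})$, $\mathbf{L}_{\boxtimes}^{\alpha}=\mathbf{L}_1^{\alpha}\otimes\mathbf{I}_{N_2}+\mathbf{I}_{N_1}\otimes\mathbf{L}_2^{\alpha}$, together with its transpose, and apply the triangle inequality to each of $\|\mathbf{L}_{\boxtimes}^{\alpha}\mathbf{x}\|_2$ and $\|(\mathbf{L}_{\boxtimes}^{\alpha})^T\mathbf{x}\|_2$ separately, which directly produces the four Kronecker terms on the right-hand side of $(\ref{chuwz:18})$. Throughout, one tacitly assumes $r_{\Omega-1}>0$ so that the cutoff frequency is meaningful.
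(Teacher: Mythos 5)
Your proposal is correct and follows essentially the same route as the paper's proof: express the residual as the tail $\frac{1}{2}\sum_{i=\Omega}^{N-1}(\mathbf{p}_i\mathbf{p}_i^T+\mathbf{q}_i\mathbf{q}_i^T)\mathbf{x}$, bound it via orthonormality and the triangle inequality by the two tail coefficient sums, and control those by $\|(\mathbf{L}_{\boxtimes}^{\alpha})^T\mathbf{x}\|_2$ and $\|\mathbf{L}_{\boxtimes}^{\alpha}\mathbf{x}\|_2$ respectively using the monotonicity of the singular values, then split via the Kronecker sum. Your explicit remark on the cross-pairing (the $\mathbf{p}$-coefficients being controlled by the transpose) and the tacit assumption $r_{\Omega-1}>0$ are both consistent with, and slightly more careful than, the paper's own presentation.
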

	
	\begin{proof}
		From (\ref{chuwz:9}), we get
		\begin{align}
			\|\mathbf{L}_{\boxtimes}^{\alpha}\mathbf{x}\|_{2}^{2}=\mathbf{x}^{T}\mathbf{Q}_{\boxtimes}\mathbf{R}^{2}\mathbf{Q}_{\boxtimes}^{T}\mathbf{x}
			=\sum_{i=0}^{N-1}r_{i}^{2}(\mathbf{q}_{i}^{T}\mathbf{x})^{2}
			\ge r_{\Omega-1}^{2}\sum_{i=\Omega}^{N-1}(\mathbf{q}_{i}^{T}\mathbf{x})^{2},\label{eq:GFRFT1}
		\end{align}
		and
		\begin{align}
			\|(\mathbf{L}_{\boxtimes}^{\alpha})^{T}\mathbf{x}\|_{2}^{2}
			=\mathbf{x}^{T}\mathbf{P}_{\boxtimes}\mathbf{R}^{2}\mathbf{P}_{\boxtimes}^{T}\mathbf{x}
			=\sum_{i=0}^{N-1}r_{i}^{2}(\mathbf{p}_{i}^{T}\mathbf{x})^{2}\ge r_{\Omega-1}^{2}\sum_{i=\Omega}^{N-1}(\mathbf{p}_{i}^{T}\mathbf{x})^{2}.\label{eq:GFRFT2}
		\end{align}
		
		Combining (\ref{chuwz:15}) and (\ref{chuwz:17}), yields
		\begin{align}
			\|\mathbf{x}-\mathbf{x}_{\Omega,\boxtimes}^{\alpha}\|_{2}
			=&\frac{1}{2}\left\|\sum_{i=\Omega}^{N-1}(\mathbf{p}_{i}\mathbf{p}_{i}^{T}+\mathbf{q}_{i}\mathbf{q}_{i}^{T})\mathbf{x}\right\|_{2}\nonumber\\ \le&\frac{1}{2}\left[\sum_{i=\Omega}^{N-1}(\mathbf{p}_{i}^{T}\mathbf{x})^{2}\right]^{1/2}
			+\frac{1}{2}\left[\sum_{i=\Omega}^{N-1}(\mathbf{q}_{i}^{T}\mathbf{x})^{2}\right]^{1/2}.\label{eq:GFRFT3}
		\end{align}
		
		Substituting (\ref{eq:GFRFT1}) and (\ref{eq:GFRFT2}) into (\ref{eq:GFRFT3}), we obtain
		\begin{eqnarray}
			\|\mathbf{x}-\mathbf{x}_{\Omega,\boxtimes}^{\alpha}\|_{2}\le\frac{1}{2r_{\Omega-1}}[\|\mathbf{L}_{\boxtimes}^{\alpha}\mathbf{x}\|_{2}
			+\|(\mathbf{L}_{\boxtimes}^{\alpha})^{T}\mathbf{x}\|_{2}].\label{eq:GFRFT4}
		\end{eqnarray}
		
		Combining (\ref{def:frac}) and (\ref{eq:GFRFT4}), we can get (\ref{chuwz:18}), which completes the proof.
	\end{proof}
	
	\section{Another SVD-Based GFRFT On a Directed Cartesian Product Graph}\label{sec4}
	
	Sometimes, some graph signals have different correlation characteristics in different directions, such as spatiotemporal signals. Therefore, defining GFRFT should reflect the spectral characteristics in different directions of graph signals. In this section, we propose a novel GFRFT $\mathcal{F}_{\otimes}^\alpha$ on a directed Cartesian product graph $\mathcal{G}=\mathcal{G}_{1}\boxtimes\mathcal{G}_{2}$, and show that $\mathcal{F}_{\otimes}^\alpha$ has lower computational complexity than $\mathcal{F}_{\boxtimes}^\alpha$. Moreover, we prove that $\mathcal{F}_{\otimes}^\alpha$ can  effectively represent graph signals that exhibit strong spatial-temporal correlation.
	
	First, we suppose that $\mathcal{G}:=\mathcal{G}_{1}\boxtimes\mathcal{G}_{2}$ is a Cartesian product graph of two directed graphs $\mathcal{G}_{1}=(\mathbf{V}_{1},\mathbf{E}_{1},\mathbf{A}_1)$ and $\mathcal{G}_{2}=(\mathbf{V}_{2}, \mathbf{E}_{2},\mathbf{A}_2)$. For $0<\alpha\le 1$, the graph fractional Laplacian matrices $\mathbf{L}_{l}^{\alpha},l=1,2$ are defined the same as in (\ref{eq:La}):
	\begin{equation}\label{eq:add1}
		\mathbf{L}^{\alpha}_{l}=\mathbf{P}_{l}\mathbf{R}_{l}\mathbf{Q}^{T}_{l}=\sum_{i=0}^{N_l-1}r_{l,i}\mathbf{p}_{l,i}\mathbf{q}_{l,i}.
	\end{equation}
	
	Let
	\begin{equation}\label{eq:add3}
		\mathbf{P}_{\otimes}=\mathbf{P}_{1}\otimes\mathbf{P}_{2},\;\; \mathbf{Q}_{\otimes}=\mathbf{Q}_{1}\otimes \mathbf{Q}_{2}.
	\end{equation}
	Then, based on $\mathbf{P}_{\otimes}$ and $\mathbf{Q}_{\otimes}$, we propose another GFRFT defined on a directed Cartesian product graph $\mathcal{G}$.
	
	\begin{Definition}\label{Def:GFRFT2}
		Let $\mathcal{G}:=\mathcal{G}_{1}\boxtimes\mathcal{G}_{2}$ be a Cartesian product graph of two directed graphs $\mathcal{G}_{l}$ and $\mathcal{G}_{2}$, and fractional Laplacian matrices $\mathbf{L}^{\alpha}_{l}$ be given by $(\ref{eq:add1})$, $\mathbf{P}_{\otimes}$ and $\mathbf{Q}_{\otimes}$ be defined as $(\ref{eq:add3})$. Then, the
		GFRFT $\mathcal{F}_{\otimes}^\alpha:\mathbb{R}^{N}\mapsto \mathbb{R}^{2N}$ of a signal $\mathbf{x}\in\mathbb{R}^{N}$ on a directed Cartesian product graph $\mathcal{G}$ is defined by
		\begin{equation}\label{chuwz:21}
			\mathcal{F}_{\otimes}^\alpha\mathbf{x}:=\dfrac{1}{2}\left(\begin{matrix}(\mathbf{P}_{\otimes}+\mathbf{Q}_{\otimes})^{T}\mathbf{x}\\
				(\mathbf{P}_{\otimes}-\mathbf{Q}_{\otimes})^{T}\mathbf{x} \end{matrix}\right).		
		\end{equation}
		Moreover, the inverse GFRFT $\mathcal{F}_{\otimes}^{-\alpha}: \mathbb{R}^{2N}\mapsto \mathbb{R}^{N}$ is given by
		\begin{equation}\label{chuwz:22}
			\mathcal{F}_{\otimes}^{-\alpha}
			\left(\begin{matrix}\mathbf{y}_{1}\\
				\mathbf{y}_{2}\end{matrix}\right)
			:=\frac{1}{2}[\mathbf{P}_{\otimes}(\mathbf{y}_{1}+\mathbf{y}_{2})+\mathbf{Q}_{\otimes}(\mathbf{y}_{1}-\mathbf{y}_{2})],
		\end{equation}
		where $\mathbf{y}_{1},\mathbf{y}_{2}\in\mathbb{R}^{N}$.
	\end{Definition}
	
	For the new GFRFT $\mathcal{F}_{\otimes}^\alpha$, we consider singular value pairs $(r_{1,i},r_{2,j})$ of fractional Laplacian matrices $\mathbf{L}_{1}^{\alpha}$ and $\mathbf{L}_{2}^{\alpha}$ as frequency pairs of the GFRFT, $\mathbf{p}_{1,i}\otimes\mathbf{p}_{2,j}$ and $\mathbf{q}_{1,i}\otimes\mathbf{q}_{2,j}\;(0\leq i\leq N_{1}-1,0\leq j\leq N_{2}-1)$ as the left and right frequency components, respectively. The computational complexity for calculating the left or right frequency components of the GFRFT $\mathcal{F}_{\otimes}^\alpha$ is $\mathcal{O}(N_{1}^{3}+N_{2}^{3})$.	The parameter $\alpha$ makes our GFRFT more flexible without extra computational cost.
	
	\begin{Remark}
		Let $\mathbf{L}_{1,q}^\alpha$ and $\mathbf{L}_{2,q}^\alpha$,\;$q>0$, be the Hermitian fractional Laplacian matrices on the directed graphs $\mathcal{G}_{1}$ and $\mathcal{G}_{2}$, respectively. By performing SVDs on $\mathbf{L}_{l,q}^\alpha$,\;$l=1,2$, we have
		\[
		\mathbf{L}_{l,q}^{\alpha}=\mathbf{P}_{l,q}\mathbf{\Upsilon}_{l,q}\mathbf{P}_{l,q}^*,\;\;l=1,2,
		\]
		where
		\begin{equation*}
			\mathbf{P}_{l,q}=[\mathbf{p}_{l,q,0},\mathbf{p}_{l,q,1},\cdots,\mathbf{p}_{l,q,N_l-1}],\;\;
		%\end{equation*}
%		and
%		\begin{equation*}
			\boldsymbol{\Upsilon}_{l,q}={\rm{diag}}([\mathbf{\varphi}_{l,q,0},\mathbf{\varphi}_{l,q,1},\cdots,\mathbf{\varphi}_{l,q,N_l-1}]).
		\end{equation*}
		Utilizing the argument mentioned in $(\ref{chuwz:11})$, we can represent the Hermitian fractional Laplacian matrix $\mathbf{L}_{\boxtimes,q}^\alpha$ on a directed Cartesian product graph $\mathcal{G}$ as
		\begin{eqnarray}
			\mathbf{L}_{\boxtimes,q}^{\alpha}=\sum^{N_{1}-1}_{i=0}\sum^{N_{2}-1}_{j=0}(\mathbf{\varphi}_{1,q,i}
			+\mathbf{\varphi}_{2,q,j})(\mathbf{p}_{1,q,i}\otimes \mathbf{p}_{2,q,j})
			\times(\mathbf{p}_{1,q,i}\otimes \mathbf{p}_{2,q,j})^{*}.\label{chuwz:38}
		\end{eqnarray}
		Hence, the computation complexity of performing SVD on the $\mathbf{L}_{\boxtimes,q}^{\alpha}$ is $\mathcal{O}(N_1^3+N_2^3)$. From $(\ref{Def:DGFRFT2})$, the DGFRFT on $\mathcal{G}$ is defined by
		\begin{equation}\label{Def:spqd}
			\mathcal{F}_{q}^\alpha\mathbf{x}=(\mathbf{P}_{1,q}\otimes\mathbf{P}_{2,q})^*\mathbf{x},
		\end{equation}
		which is coincides with $\mathcal{F}_{\boxtimes,q}^\alpha$ and $\mathcal{F}_{\otimes,q}^\alpha$.
	\end{Remark}
	
	For a signal $\mathbf{x}\in \mathbb{R}^N$ on a directed Cartesian product graph $\mathcal{G}$, it is easily to obtain that
	\begin{equation}\label{chuwz:23}
		\mathcal{F}_{\otimes}^{-\alpha}[\mathcal{F}_{\otimes}^\alpha\mathbf{x}]=\mathbf{x},
	\end{equation}	
	and
	\begin{equation}\label{chuwz:24}	
		\|\mathcal{F}_{\otimes}^\alpha\mathbf{x}\|_{2}=\|\mathbf{x}\|_{2}.
	\end{equation}	
	
	\begin{Remark}
		When $\alpha=1$, the GFRFT $\mathcal{F}_{\otimes}^\alpha$ $(\ref{chuwz:21})$ reduces to GFT $\mathcal{F}_{\otimes}$ $(\ref{chuwz:7})$ proposed in $\cite{CCL23}$. Hence, our GFRFT $\mathcal{F}_{\otimes}^\alpha$ is a generalization of GFT to fractional order.
	\end{Remark}
	
	In the following, we show that the energy of spatial-temporal signals on a directed Cartesian product graph $\mathcal{G}$ with strong correlation mainly concentrated in the low frequencies of the new GFRFT $\mathcal{F}_{\otimes}^\alpha$.
	
	\begin{Theorem}\label{Thm:An2}
		Suppose that $\mathcal{G}:=\mathcal{G}_{1}\boxtimes\mathcal{G}_{2}$ is a Cartesian product graph of two directed graphs $\mathcal{G}_{l}$ and $\mathcal{G}_{2}$, and fractional Laplacian matrices $\mathbf{L}^{\alpha}_{l}$ is given by $(\ref{eq:add1})$, $r_{l,i},\;\mathbf{p}_{l,i},\;\mathbf{q}_{l,i},0\leq i\le N_{l}-1,\;l=1,2$ are the same as in $(\ref{eq:add1})$, $\tau_k,\;0\le k\le N-1$ are a non-descending rearrangement of $r_{1,i}+r_{2,j}$,\;$0\leq i\leq N_{1}-1,\;0\leq j\leq N_{2}-1$. Let $\Omega\in[1,2,\cdots,N]$ be the frequency bandwidth of GFRFT $\mathcal{F}_{\otimes}^\alpha$ in $(\ref{chuwz:21})$, and the low frequency component of a signal $\mathbf{x}$ on $\mathcal{G}$ be
		\begin{eqnarray}\label{chuwz:25}
			\mathbf{x}_{\Omega,\otimes}^{\alpha}
			&=&\frac{1}{2}\sum_{(i,j)\in \mathcal{S}_{\Omega}}\Big[(\mathbf{p}_{1,i}\otimes\mathbf{p}_{2,j})(\mathbf{p}_{1,i}\otimes\mathbf{p}_{2,j})^{T}\mathbf{x}{}\nonumber\\
			&&{}+(\mathbf{q}_{1,i}\otimes\mathbf{q}_{2,j})(\mathbf{q}_{1,i}\otimes\mathbf{q}_{2,j})^{T}\mathbf{x}\Big],
		\end{eqnarray}
		where $\mathcal{S}_{\Omega}=\{(i,j)|\tau_k=r_{1,i}+r_{2,j},0\le k\le\Omega-1\}$. Then, we get
		\begin{align}\label{chuwz:26}
			\|\mathbf{x}-\mathbf{x}_{\Omega,\otimes}^{\alpha}\|_{2}
			\le&\frac{1}{2\tau_{\Omega-1}}\left[\|(\mathbf{L}_{1}^{\alpha}\otimes\mathbf{I}_{N_{2}})\mathbf{x}\|_{2}
			+\|((\mathbf{L}_{1}^{\alpha})^{T}\otimes\mathbf{I}_{N_{2}})\mathbf{x}\|_{2}\right.\nonumber\\
			&\quad\left.+\|(\mathbf{I}_{N_{1}}\otimes\mathbf{L}_{2}^{\alpha})\mathbf{x}\|_{2}+\|(\mathbf{I}_{N_{1}}\otimes(\mathbf{L}_{2}^{\alpha})^{T})\mathbf{x}\|_{2}\right],
		\end{align}	
		where $\tau_{\Omega-1}$ is the cut-off frequency.
	\end{Theorem}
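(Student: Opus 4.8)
The plan is to mirror the argument used for Theorem \ref{Thm-GFRFT1}, adapting it to the separable (tensor) structure of $\mathcal{F}_{\otimes}^{\alpha}$. First I would observe that, since $\mathbf{P}_{\otimes}=\mathbf{P}_1\otimes\mathbf{P}_2$ and $\mathbf{Q}_{\otimes}=\mathbf{Q}_1\otimes\mathbf{Q}_2$ are orthogonal (being Kronecker products of orthogonal matrices), the families $\{\mathbf{p}_{1,i}\otimes\mathbf{p}_{2,j}\}$ and $\{\mathbf{q}_{1,i}\otimes\mathbf{q}_{2,j}\}$ are each complete orthonormal systems, so $\frac{1}{2}\sum_{i,j}[(\mathbf{p}_{1,i}\otimes\mathbf{p}_{2,j})(\mathbf{p}_{1,i}\otimes\mathbf{p}_{2,j})^T+(\mathbf{q}_{1,i}\otimes\mathbf{q}_{2,j})(\mathbf{q}_{1,i}\otimes\mathbf{q}_{2,j})^T]=\mathbf{I}_N$, which is precisely the inversion identity (\ref{chuwz:23}). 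Subtracting (\ref{chuwz:25}) then expresses the error $\mathbf{x}-\mathbf{x}_{\Omega,\otimes}^{\alpha}$ as the same sum restricted to the complement $(i,j)\notin\mathcal{S}_{\Omega}$. Applying the triangle inequality together with orthonormality (Parseval) reduces the problem to bounding the two tail sums $S_p:=\sum_{(i,j)\notin\mathcal{S}_{\Omega}}d_{ij}^2$ and $S_q:=\sum_{(i,j)\notin\mathcal{S}_{\Omega}}c_{ij}^2$, where $d_{ij}:=(\mathbf{p}_{1,i}\otimes\mathbf{p}_{2,j})^T\mathbf{x}$ and $c_{ij}:=(\mathbf{q}_{1,i}\otimes\mathbf{q}_{2,j})^T\mathbf{x}$, giving $\|\mathbf{x}-\mathbf{x}_{\Omega,\otimes}^{\alpha}\|_2\le\frac{1}{2}S_p^{1/2}+\frac{1}{2}S_q^{1/2}$.

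Second, I would establish the algebraic identities linking the four operator norms on the right of (\ref{chuwz:26}) to the coefficients $c_{ij}$ and $d_{ij}$. From the SVD (\ref{eq:add1}) one has $(\mathbf{L}_1^{\alpha})^T\mathbf{L}_1^{\alpha}=\sum_i r_{1,i}^2\mathbf{q}_{1,i}\mathbf{q}_{1,i}^T$ and $\mathbf{L}_1^{\alpha}(\mathbf{L}_1^{\alpha})^T=\sum_i r_{1,i}^2\mathbf{p}_{1,i}\mathbf{p}_{1,i}^T$, and analogously for $\mathbf{L}_2^{\alpha}$. Hence $\|(\mathbf{L}_1^{\alpha}\otimes\mathbf{I}_{N_2})\mathbf{x}\|_2^2=\mathbf{x}^T((\mathbf{L}_1^{\alpha})^T\mathbf{L}_1^{\alpha}\otimes\mathbf{I}_{N_2})\mathbf{x}$, and because the orthonormal family $\{\mathbf{q}_{2,j}\}$ resolves the identity $\mathbf{I}_{N_2}=\sum_j\mathbf{q}_{2,j}\mathbf{q}_{2,j}^T$, this collapses to $\sum_{i,j}r_{1,i}^2c_{ij}^2$. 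In the same manner $\|(\mathbf{I}_{N_1}\otimes\mathbf{L}_2^{\alpha})\mathbf{x}\|_2^2=\sum_{i,j}r_{2,j}^2c_{ij}^2$, while the transposed operators yield $\|((\mathbf{L}_1^{\alpha})^T\otimes\mathbf{I}_{N_2})\mathbf{x}\|_2^2=\sum_{i,j}r_{1,i}^2d_{ij}^2$ and $\|(\mathbf{I}_{N_1}\otimes(\mathbf{L}_2^{\alpha})^T)\mathbf{x}\|_2^2=\sum_{i,j}r_{2,j}^2d_{ij}^2$, now resolving the identity block with the $\mathbf{p}$-families.

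Third, I would invoke the frequency ordering. For $(i,j)\notin\mathcal{S}_{\Omega}$ the definition of $\mathcal{S}_{\Omega}$ together with the non-descending rearrangement $\tau_k$ forces $r_{1,i}+r_{2,j}\ge\tau_{\Omega-1}$, so $c_{ij}^2\le\tau_{\Omega-1}^{-2}(r_{1,i}+r_{2,j})^2c_{ij}^2$; summing and then enlarging the index set to all $(i,j)$ gives $S_q\le\tau_{\Omega-1}^{-2}\sum_{i,j}(r_{1,i}+r_{2,j})^2c_{ij}^2$. Applying Minkowski's inequality to split $(r_{1,i}+r_{2,j})$ into its two summands and then using the identities of the previous step yields $S_q^{1/2}\le\tau_{\Omega-1}^{-1}(\|(\mathbf{L}_1^{\alpha}\otimes\mathbf{I}_{N_2})\mathbf{x}\|_2+\|(\mathbf{I}_{N_1}\otimes\mathbf{L}_2^{\alpha})\mathbf{x}\|_2)$, and the identical treatment of $S_p$ produces the two transposed norms. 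Substituting both into $\|\mathbf{x}-\mathbf{x}_{\Omega,\otimes}^{\alpha}\|_2\le\frac{1}{2}(S_p^{1/2}+S_q^{1/2})$ delivers (\ref{chuwz:26}).

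The main obstacle is conceptual rather than computational: unlike the $\boxtimes$-transform, whose singular vectors diagonalize $\mathbf{L}_{\boxtimes}^{\alpha}$ and allow the bound to be phrased directly through $\|\mathbf{L}_{\boxtimes}^{\alpha}\mathbf{x}\|_2$, the tensor basis $\mathbf{P}_{\otimes}$ does not in general diagonalize $\mathbf{L}_{\boxtimes}^{\alpha}$ for directed graphs (this coincidence holds only in the undirected case where $\mathbf{P}_{\otimes}=\mathbf{Q}_{\otimes}$). Consequently the cut-off frequencies $\tau_k$ are the ordered pairwise sums $r_{1,i}+r_{2,j}$ rather than genuine singular values, and the estimate must be routed through the four Kronecker-factor operators. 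The step demanding the most care is therefore the coupling of these separate factors: verifying that completeness of $\{\mathbf{q}_{2,j}\}$ (resp.\ $\{\mathbf{p}_{2,j}\}$) exactly absorbs the identity block $\mathbf{I}_{N_2}$ so that each operator norm reduces to the clean coefficient sum $\sum_{i,j}r_{1,i}^2c_{ij}^2$, after which Minkowski's inequality cleanly separates the combined frequency $r_{1,i}+r_{2,j}$.
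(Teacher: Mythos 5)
Your proposal is correct and follows essentially the same route as the paper's own proof: the same Parseval/triangle-inequality reduction to the two tail sums over $(i,j)\notin\mathcal{S}_{\Omega}$, the same identities expressing the four Kronecker-factor operator norms as weighted coefficient sums $\sum_{i,j}r_{1,i}^2c_{ij}^2$ etc., and the same Minkowski-type step combining $r_{1,i}$ and $r_{2,j}$ into $r_{1,i}+r_{2,j}\ge\tau_{\Omega-1}$ off $\mathcal{S}_{\Omega}$. The only difference is the order in which the cutoff bound and the Minkowski splitting are applied, which is immaterial.
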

	
	\begin{proof}
		From (\ref{eq:add1}), we obtain
		\begin{equation*}
			\|(\mathbf{L}_{1}^{\alpha}\otimes\mathbf{I}_{N_{2}})\mathbf{x}\|_{2}^{2}
			=\sum_{i=0}^{N_{1}-1}\sum_{j=0}^{N_{2}-1}r_{1,i}^{2}((\mathbf{q}_{1,i}\otimes\mathbf{q}_{2,j})^{T}\mathbf{x})^{2}
		\end{equation*}
		and
		\begin{equation*}
			\|(\mathbf{I}_{N_{1}}\otimes\mathbf{L}_{2}^{\alpha})\mathbf{x}\|_{2}^{2}
			=\sum_{i=0}^{N_{1}-1}\sum_{j=0}^{N_{2}-1}r_{2,j}^{2}((\mathbf{q}_{1,i}\otimes\mathbf{q}_{2,j})^{T}\mathbf{x})^{2}.
		\end{equation*}
		Therefore,
		\begin{align}		
         &(\|(\mathbf{L}_{1}^{\alpha}\otimes\mathbf{I}_{N_{2}})\mathbf{x}\|_{2}+\|(\mathbf{I}_{N_{1}}\otimes\mathbf{L}_{2}^{\alpha})\mathbf{x}\|_{2})^{2}\nonumber\\ \ge&\sum_{i=0}^{N_{1}-1}\sum_{j=0}^{N_{2}-1}(r_{1,i}+r_{2,j})^{2}((\mathbf{q}_{1,i}\otimes\mathbf{q}_{2,j})^{T}\mathbf{x})^{2}\nonumber\\
			\ge&\tau_{\Omega-1}^{2}\sum_{(i,j)\notin\mathcal{S}_{\Omega}}((\mathbf{q}_{1,i}\otimes\mathbf{q}_{2,j})^{T}\mathbf{x})^{2}.\label{eq:th1}
		\end{align}
		
		Similarly, it is follows from $(\ref{eq:add1})$ that
		\begin{align}
			&\left(\|((\mathbf{L}_{1}^{\alpha})^{T}\otimes \mathbf{I}_{N_{2}})\mathbf{x}\|_{2}+\|(\mathbf{I}_{N_{1}}\otimes (\mathbf{L}_{2}^{\alpha})^{T} )\mathbf{x}\|_{2}\right)^{2}\nonumber\\
			\ge&\tau_{\Omega-1}^{2}\sum_{(i,j)\notin \mathcal{S}_{\Omega}}((\mathbf{p}_{1,i}\otimes\mathbf{p}_{2,j})^{T}\mathbf{x})^{2}.\label{eq:th2}
		\end{align}
		
		Combining (\ref{chuwz:23}) and (\ref{chuwz:25}), we get
		\begin{align}
		&\|\mathbf{x}-\mathbf{x}_{\Omega,\otimes}^{\alpha}\|_{2}\nonumber\\
		=&\frac{1}{2}\Bigg\|\sum_{(i,j)\notin \mathcal{S}_{\Omega}}\Big[(\mathbf{p}_{1,i}\otimes\mathbf{p}_{2,j})(\mathbf{p}_{1,i}\otimes\mathbf{p}_{2,j})^{T}\mathbf{x}+(\mathbf{q}_{1,i}\otimes\mathbf{q}_{2,j})(\mathbf{q}_{1,i}\otimes\mathbf{q}_{2,j})^{T}\mathbf{x}\Big]\Bigg\|_2\nonumber\\
		\le&\frac{1}{2}\left[\sum_{(i,j)\notin \mathcal{S}_{\Omega}}((\mathbf{p}_{1,i}\otimes\mathbf{p}_{2,j})^{T}\mathbf{x})^{2}\right]^{1/2}\nonumber\\
         &\quad+\frac{1}{2}\left[\sum_{(i,j)\notin\mathcal{S}_{\Omega}}((\mathbf{q}_{1,i}\otimes\mathbf{q}_{2,j})^{T}\mathbf{x})^{2}\right]^{1/2}.\label{eq:th3}
	\end{align}
		
		Substituting (\ref{eq:th1}) and (\ref{eq:th2}) into (\ref{eq:th3}), we have
		\begin{align*}
			\|\mathbf{x}-\mathbf{x}_{\Omega,\otimes}^{\alpha}\|_{2}
			\le&\frac{1}{2\tau_{\Omega-1}}\left[\|(\mathbf{L}_{1}^{\alpha}\otimes\mathbf{I}_{N_{2}})\mathbf{x}\|_{2}
			+\|((\mathbf{L}_{1}^{\alpha})^{T}\otimes\mathbf{I}_{N_{2}})\mathbf{x}\|_{2}\right.\nonumber\\
			&\quad\quad\left.+\|(\mathbf{I}_{N_{1}}\otimes\mathbf{L}_{2}^{\alpha})\mathbf{x}\|_{2}+\|(\mathbf{I}_{N_{1}}\otimes(\mathbf{L}_{2}^{\alpha})^{T})\mathbf{x}\|_{2}\right],
		\end{align*}
		which completes the proof.
	\end{proof}
	
	For a graph signal $\mathbf{X}\in \mathbb{R}^{N_2\times N_1}$ on directed Cartesian product graph $\mathcal{G}$, the
	GFRFT $\mathcal{F}_{\otimes}^\alpha$ of $\mathbf{X}$ can be rewritten as
	\begin{equation}\label{chuwz:27} \mathcal{F}_{\otimes}^\alpha{\rm{vec}}(\mathbf{X})=\frac{1}{2}\left(\begin{matrix}{\rm{vec}}(\mathbf{P}_{2}^{T}\mathbf{X}\mathbf{P}_{1}+\mathbf{Q}_{2}^{T}\mathbf{X}\mathbf{Q}_{1})\\
			{\rm{vec}}(\mathbf{P}_{2}^{T}\mathbf{X}\mathbf{P}_{1}-\mathbf{Q}_{2}^{T}\mathbf{X}\mathbf{Q}_{1}) \end{matrix}\right).
	\end{equation}	
	
	Then, we can first obtain GFRFT $\mathcal{F}_{\otimes}^\alpha$ in the direction of the graph $\mathcal{G}_1$, then in the direction of $\mathcal{G}_{2}$ (see Algorithm \ref{algo1}).
	
	\begin{algorithm}[!h]
		\caption{Algorithm to perform the GFRFT $\mathcal{F}_{\otimes}^\alpha$}
		\vspace{1mm}
		\textbf{Input}: A graph signal $\mathbf{X}$.\\
		\textbf{Steps}:\\
		\vspace{-5mm}
		\begin{algorithmic}[1]\label{algo1}
			\STATE Do $\mathbf{Z}_{1}=\mathbf{X}\mathbf{P}_{1}$ and $\widetilde{\mathbf{Z}}_{1}=\mathbf{X}\mathbf{Q}_{1}$;
			\STATE Do $\mathbf{Z}_{2}=\mathbf{P}_{2}^{T}\mathbf{Z}_1$ and $\widetilde{\mathbf{Z}}_{2}=\mathbf{Q}_{2}^{T}\widetilde{\mathbf{Z}}_{1}$;
			\STATE Do $\mathcal{F}_{\otimes}^\alpha \mathbf{X}_{1}=\frac{\mathbf{Z}_{2}+\widetilde{\mathbf{Z}}_{2}}{2}$ and $\mathcal{F}_{\otimes}^\alpha \mathbf{X}_{2}=\frac{\mathbf{Z}_{2}-\widetilde{\mathbf{Z}}_{2}}{2}$.
		\end{algorithmic}
		\textbf{Outputs}: $\mathcal{F}_{\otimes}^\alpha \mathbf{X}_{1}$ and $\mathcal{F}_{\otimes}^\alpha \mathbf{X}_{2}$ are two components of the GFRFT
		$\mathcal{F}_{\otimes}^\alpha{\rm{vec}}(\mathbf{X})$.
	\end{algorithm}
	
	Similarly, the inverse GFRFT $\mathcal{F}_{\otimes}^{-\alpha}$ can be represented by
	\begin{equation*}
		\mathcal{F}_{\otimes}^{-\alpha}\left(\begin{matrix}\mathbf{y}_{1}\\
			\mathbf{y}_{2}\end{matrix}\right)
		=\frac{1}{2}(\mathbf{P}_{2}(\mathbf{Y}_{1}+\mathbf{Y}_{2})\mathbf{P}_{1}^{T}+\mathbf{Q}_{2}(\mathbf{Y}_{1}-\mathbf{Y}_{2})\mathbf{Q}_{1}^{T}),
	\end{equation*}	
	for all $\mathbf{y}_1,\mathbf{y}_2\in \mathbb{R}^N$, where $\mathbf{Y}_{i}={\rm{vec}}^{-1}(\mathbf{y}_i),\; i=1,2$. Then we can obtain the original signal $\mathbf{x}$ by Algorithm \ref{algo2}.
	
	\begin{algorithm}[!h]
		\caption{Algorithm to perform the Inverse GFRFT $\mathcal{F}_{\otimes}^{-\alpha}$}
		\vspace{1mm}
		\textbf{Inputs vectorization}: $\mathbf{Y}_{1}={\rm{vec}}^{-1}(\mathbf{y}_{1})$ and $\mathbf{Y}_{2}={\rm{vec}}^{-1}(\mathbf{y}_{2})$.\\
		\textbf{Steps}:\\
		\vspace{-5mm}
		\begin{algorithmic}[1]\label{algo2}
			\STATE Do $\mathbf{W}_{1}=(\mathbf{Y}_{1}+\mathbf{Y}_{2})\mathbf{P}_{1}^{T}$ and $\widetilde{\mathbf{W}}_{1}=(\mathbf{Y}_{1}-\mathbf{Y}_{2})\mathbf{Q}_{1}^{T}$;
			\STATE Do $\mathbf{W}_{2}=\mathbf{P}_{2}\mathbf{W}_{1}$ and $\widetilde{\mathbf{W}}_{2}=\mathbf{Q}_{2}\widetilde{\mathbf{W}}_{1}$;
			\STATE Do $\mathbf{X}$=$\frac{\mathbf{W}_{2}+\widetilde{\mathbf{W}}_{2}}{2}$.
		\end{algorithmic}
		\textbf{Outputs}: $\mathbf{x}={\rm{vec}}(\mathbf{X})=\mathcal{F}_{\otimes}^{-\alpha}\left(\begin{matrix}\mathbf{y}_{1}\\
			\mathbf{y}_{2}\end{matrix}\right)$.
	\end{algorithm}
	
	If $\mathcal{G}$ is an undirected Cartesian product graph, then $\mathbf{P}_{\boxtimes}=\mathbf{Q}_{\boxtimes}$ in $(\ref{chuwz:9})$ and $\mathbf{P}_{\otimes}=\mathbf{Q}_{\otimes}$ in $(\ref{eq:add3})$ are equal. Hence, for any graph signal $\mathbf{x}$ on $\mathcal{G}$, we have
	\begin{equation}\label{chuwz:28}
		\mathcal{F}_{\boxtimes}^\alpha\mathbf{x}=\mathcal{F}_{\otimes}^\alpha\mathbf{x}
		=\left(\begin{matrix}\mathbf{P}_{\boxtimes}^{T}{\mathbf{x}}\\
			\mathbf{0}_N\end{matrix}\right).
	\end{equation}	
	
	When $\alpha=1$, Cheng et al \cite{CCL23} proved that two GFTs $\mathcal{F}_{\boxtimes}$ and $\mathcal{F}_{\otimes}$ are identical only for the undirected Cartesian product graph $\mathcal{G}$. Therefore, our two GFRFTs are not the same in general.
	
\begin{Remark}
Unlike undirected graphs, the fractional Laplacian matrix for directed graphs is asymmetric, thus eigenvalue decomposition is ineffective. $\mathcal{F}_{\boxtimes}^{\alpha}$ and $\mathcal{F}_{\otimes}^{\alpha}$ address this problem using singular value decomposition (SVD). Previous work has primarily focused on undirected product graphs \cite{YL22}, or single directed graphs \cite{YL23}, and this paper extends the SVD-based GFT method \cite{CCL23} to fractional domain on directed Cartesian product graphs, a structure crucial for modeling spatiotemporal data, such as weather station time series.
Moreover, $\mathcal{F}_{\boxtimes}^{\alpha}$ is suitable for global filtering of coupled signals, a low-pass filter that preserves a small $r_i$ can suppress spatiotemporal synchronization noise. While the transform $\mathcal{F}_{\otimes}^{\alpha}$ is used to separate signals, a filter that preserves smaller $r_{1,i}$ (low-frequency time) and larger $r_{2,j}$ (high-frequency space) can retain slow trends while enhancing spatial details.
Furthermore, $\mathcal{F}_{\boxtimes}^{\alpha}$ requires $\mathcal{O}((N_1N_2)^3)$ operations for SVD of the fractional Laplacian matrix $L_{\boxtimes}^\alpha$, restricting its use to small-scale directed product graphs. Conversely, $\mathcal{F}_{\otimes}^{\alpha}$ achieves $\mathcal{O}(N_1^3+N_2^3)$ complexity through separable subgraph processing, making it feasible for large-scale directed product graphs. Finally, the fractional order $q$ of our proposed GFRFTs can be freely chosen, making them more flexible and practical than GFTs mentioned in \cite{CCL23}.
\end{Remark}

	\section{SVD-Based MGFRFT on a Cartesian Product of m Directed Graphs}\label{sec5}
	
	A signal defined on a Cartesian product graph can be viewed as a two-dimensional signal, but it is not real two-dimensional data. Instead, it is a graph signal distributed across two-dimensional graph nodes. Therefore, this concept can be naturally extended to multidimensional graph signals. For any natural number $m$, an $m$-dimensional GFRFT defined on the Cartesian product graph ${\mathcal{G}_1}\boxtimes\cdots\boxtimes{\mathcal{G}_m}=((\cdots({\mathcal{G}_1}\boxtimes{\mathcal{G}_2})\cdots)\boxtimes{\mathcal{G}_m})$ is obtained inductively from the definition of the 2D GFRFT. In this section, we extend the definitions of GFRFTs on a directed Cartesian product graph from two graphs to $m$ graphs setting.
	
	First, we consider a directed Cartesian product graph $\mathcal{G}=\mathcal{G}_{1}\boxtimes\mathcal{G}_{2}\boxtimes\cdots\boxtimes\mathcal{G}_{m}$, where $\mathcal{G}_{i}=(\mathbf{V}_{i},\mathbf{E}_{i},\mathbf{A}_i)$, $i=1,2,\cdots,m$ are directed graphs. By performing SVD, the Laplacian matrices $\mathbf{L}_l,\;l=1,2,\cdots,m$ of graph $\mathcal{G}_l$ can be decomposed into
	\[
	\mathbf{L}_{l}=\mathbf{U}_{l}\mathbf{\Sigma}_{l}\mathbf{V}_{l}^{T},\; l=1,2,\cdots,m,
	\]
	and the fractional Laplacian matrices $\mathbf{L}_{l}^{\alpha},l=1,2,\cdots,m$ are defined the same as (\ref{eq:La}):
	\begin{equation}\label{eq:mLa}
		\mathbf{L}^{\alpha}_{l}=\mathbf{P}_{l}\mathbf{R}_{l}\mathbf{Q}^{T}_{l}=\sum_{i=0}^{N_l-1}r_{l,i}\mathbf{p}_{l,i}\mathbf{q}_{l,i},\;\;l=1,2,\cdots,m.
	\end{equation}
	
Then, the fractional Laplacian matrix $\mathbf{L}_{m,\boxtimes}^{\alpha}$ for a directed Cartesian product graph $\mathcal{G}$ of $m$ directed graphs is given by
	\begin{align}\label{eq:mfrac}
		\mathbf{L}_{m,\boxtimes}^{\alpha}:=&\mathbf{L}_1^{\alpha}\oplus\mathbf{L}_2^{\alpha}\oplus\cdots\oplus\mathbf{L}_m^{\alpha}\nonumber\\
		=&\sum_{i=1}^{m}\mathbf{I}_{N_1N_2\cdots N_{i-1}}\otimes\mathbf{L}_i^{\alpha}\otimes \mathbf{I}_{N_{i+1}N_{i+2}\cdots N_m}.
	\end{align}
	
	By performing SVD on $\mathbf{L}_{m,\boxtimes}^{\alpha}$, it can be represented by
	\begin{equation}\label{eq:mfrc1}
		\mathbf{L}_{m,\boxtimes}^{\alpha}=\mathbf{P}_{m,\boxtimes}\mathbf{R}_{m,\boxtimes}\mathbf{Q}_{m,\boxtimes}^{T}
		=\sum_{k=0}^{N-1}r_{m,k}\mathbf{p}_{m,k}\mathbf{q}_{m,k}^{T},
	\end{equation}
	where $N=N_1N_2\cdots N_m$, matrices
	\[
	\mathbf{P}_{m,\boxtimes}=[\mathbf{p}_{m,0},\mathbf{p}_{m,1},\cdots,\mathbf{p}_{m,N-1}],\;\;
	%\]
%	and
%	\[
	\mathbf{Q}_{m,\boxtimes}=[\mathbf{q}_{m,0},\mathbf{q}_{m,1},\cdots,\mathbf{q}_{m,N-1}]
	\]
	are orthonormal,
	\begin{equation*}
		\mathbf{R}_{m,\boxtimes}={\rm{diag}}([r_{m,0},r_{m,1},\cdots,r_{m,N-1}]),
	\end{equation*}
	which satisfies $0=r_{m,0}\le r_{m,1}\le\cdots\le r_{m,N-1}$.
	The time complexity for computing the SVD factorization of $\mathbf{L}_{m,\boxtimes}^{\alpha}$ is $\mathcal{O}(N^{3})$.
	
	Next, based on the SVD of $\mathbf{L}_{m,\boxtimes}^{\alpha}$, we define the GFRFT of a graph signal on a Cartesian product graph with $m$ directed graphs (MGFRFT).
	
	\begin{Definition}\label{Def-mFT}
		Suppose that $\mathcal{G}=\mathcal{G}_{1}\boxtimes\mathcal{G}_{2}\boxtimes\cdots\boxtimes\mathcal{G}_{m}$ is a Cartesian product of $m$ directed graphs $\mathcal{G}_{l},l=1,2,\cdots,m$, the fractional Laplacian matrix $\mathbf{L}_{m,\boxtimes}^{\alpha}$ on $\mathcal{G}$ is defined the same as $(\ref{eq:mfrac})$, and has the SVD form as in $(\ref{eq:mfrc1})$, $\alpha$ is the fractional order, which satisfies $0<\alpha\le 1$. The MGFRFT $\mathcal{F}_{m,\boxtimes}^\alpha$ of a signal $\mathbf{x}:\mathbf{V}_1\times\mathbf{V}_2\times\cdots\times\mathbf{V}_m\rightarrow \mathbb{R}^{N}$ on $\mathcal{G}$ is defined by
		
		\begin{eqnarray}\label{m1}
			\mathcal{F}_{m,\boxtimes}^\alpha\mathbf{x} :=\dfrac{1}{2}\left(\begin{matrix}(\mathbf{P}_{m,\boxtimes}+\mathbf{Q}_{m,\boxtimes})^{T}\mathbf{x}\\
				(\mathbf{P}_{m,\boxtimes}-\mathbf{Q}_{m,\boxtimes})^{T}\mathbf{x} \end{matrix}\right)
			=\frac{1}{2}\left(\begin{matrix}(\mathbf{p}_{m,0}+\mathbf{q}_{m,0})^{T}\mathbf{x}\\
				\vdots\\
				(\mathbf{p}_{m,N-1}+\mathbf{q}_{m,N-1})^{T}\mathbf{x}\\
				(\mathbf{p}_{m,0}-\mathbf{q}_{m,0})^{T}\mathbf{x}\\
				\vdots\\
				(\mathbf{p}_{m,N-1}-\mathbf{q}_{m,N-1})^{T}\mathbf{x}\end{matrix}\right).
		\end{eqnarray}
		
		In addition, the inverse MGFRFT $\mathcal{F}_{m,\boxtimes}^{-\alpha}$ is defined as
		\begin{align}
			\mathcal{F}_{m,\boxtimes}^{-\alpha}\left(\begin{matrix}\mathbf{y}_{1}\\ \mathbf{y}_{2}\end{matrix}\right)		:=&\frac{1}{2}\left[\mathbf{P}_{m,\boxtimes}(\mathbf{y}_{1}+\mathbf{y}_{2})+\mathbf{Q}_{m,\boxtimes}(\mathbf{y}_{1}-\mathbf{y}_{2})\right]\nonumber\\
			=&\frac{1}{2}\sum_{i=0}^{N-1}\left[(y_{1,i}+y_{2,i})\mathbf{p}_{m,i}+(y_{1,i}-y_{2,i})\mathbf{q}_{m,i}\right],\label{m2}
		\end{align}
		for all $\mathbf{y}_l=[y_{l,0},y_{l,1},\cdots,y_{l, N-1}]^{T}\in \mathbb{R}^N,\;l=1,2.$
	\end{Definition}
	
	%For a signal $\mathbf{x}$ on the Cartesian product graph $\mathcal{G}$, it can easily obtain
%	\begin{equation}\label{m3}
%		\mathcal{F}_{m,\boxtimes}^{-\alpha}[\mathcal{F}_{m,\boxtimes}^\alpha\mathbf{x}]=\mathbf{x},
%	\end{equation}
%	and
%	\begin{equation}\label{m4}
%		\|\mathcal{F}_{m,\boxtimes}^\alpha\mathbf{x}\|_{2}=\|\mathbf{x}\|_{2}, {\mbox{for all}\;\;} \mathbf{x}\in \mathbb{R}^N.
%	\end{equation}
%	
	Next, we show that most of the energy of a graph signal on $\mathcal{G}$ with strong spatiotemporal correlation is concentrated in the low frequencies of MGFRFT $\mathcal{F}_{m,\boxtimes}^\alpha$.
	
	\begin{Theorem}\label{Thm-mGF}
		Suppose that $\mathcal{G}=\mathcal{G}_{1}\boxtimes\mathcal{G}_{2}\boxtimes\cdots\boxtimes\mathcal{G}_{m}$ is a Cartesian product of $m$ directed graphs $\mathcal{G}_{l},l=1,2,\cdots,m$, the fractional Laplacian matrix $\mathbf{L}_{m,\boxtimes}^{\alpha}$ on $\mathcal{G}$ is defined the same as $(\ref{eq:mfrac})$, and $\mathbf{p}_{m,i},\;\mathbf{q}_{m,i},\;r_{m,i},\;0\le i\le N-1$ are the same as in $(\ref{eq:mfrc1})$, $\alpha$ is the fractional order, which satisfies $0<\alpha\le 1$. Let $\Gamma\in\{1,2,\cdots,N\}$ be the frequency bandwidth of the MGFRFT $\mathcal{F}_{m,\boxtimes}^\alpha$ in $(\ref{m1})$, and the low frequency component of a signal $\mathbf{x}$ on $\mathcal{G}$ be
		\begin{eqnarray}
			\mathbf{x}_{\Gamma,m,\boxtimes}^{\alpha}&:=&\frac{1}{2}\sum_{i=0}^{\Gamma-1}[(y_{1,i}+y_{2,i})\mathbf{p}_{m,i}+(y_{1,i}-y_{2,i})\mathbf{q}_{m,i}]\nonumber\\
			&=&\frac{1}{2}\sum_{i=0}^{\Gamma-1}(\mathbf{p}_{m,i}\mathbf{p}_{m,i}^{T}+\mathbf{q}_{m,i}\mathbf{q}_{m,i}^{T})\mathbf{x},\label{Thm-mGF1}
		\end{eqnarray}
		where
		\[
		y_{1,i}:=\frac{(\mathbf{p}_{m,i}+\mathbf{q}_{m,i})^{T}\mathbf{x}}{2},\; y_{2,i}:=\frac{(\mathbf{p}_{m,i}-\mathbf{q}_{m,i})^{T}\mathbf{x}}{2},
		\]
		for all $0\leq i\leq\Gamma-1.$
		Then, we have
		\begin{align*}
			&\|\mathbf{x}-\mathbf{x}_{\Gamma,m,\boxtimes}^{\alpha}\|_{2}
			\le\frac{1}{2r_{m,\Gamma-1}}(\|\mathbf{L}_{m,\boxtimes}^{\alpha}\mathbf{x}\|_{2}
			+\|(\mathbf{L}_{m,\boxtimes}^{\alpha})^{T}\mathbf{x}\|_{2})\\
			\le&\frac{1}{2r_{m,\Gamma-1}}\Big[\|(\mathbf{L}_1^{\alpha}\otimes \mathbf{I}_{N_2N_{3}\cdots N_m})
			\mathbf{x}\|_{2}+\|(\mathbf{I}_{N_1}\otimes\mathbf{L}_2^{\alpha}\otimes \mathbf{I}_{N_{3}N_{4}\cdots N_m})\mathbf{x}\|_{2}\\
			&\quad +\cdots +\|(\mathbf{I}_{N_{1}N_{2}\cdots N_{m-1}}\otimes\mathbf{L}_m^{\alpha})\mathbf{x}\|_{2}
			+\|((\mathbf{L}_{1}^{\alpha})^{T}\otimes\mathbf{I}_{N_2N_{3}\cdots N_m})\mathbf{x}\|_{2}.\\
			&\quad+\|(\mathbf{I}_{N_1}\otimes(\mathbf{L}_{2}^{\alpha})^{T}\otimes \mathbf{I}_{N_{3}N_{4}\cdots N_m})\mathbf{x}\|_{2}
		+\cdots +\|(\mathbf{I}_{N_{1}N_{2}\cdots N_{m-1}}\otimes(\mathbf{L}_{m}^{\alpha})^{T})\mathbf{x}\|_{2}\Big],
		\end{align*}
		where $r_{m,\Gamma-1}$ is the cutoff frequency.
	\end{Theorem}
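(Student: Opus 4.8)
The plan is to mirror the argument of Theorem~\ref{Thm-GFRFT1} essentially verbatim; the only structural change is that the Kronecker sum in (\ref{eq:mfrac}) now has $m$ summands instead of two, so the final triangle-inequality expansion will produce $2m$ terms rather than four. There is no genuinely new idea here — the whole difficulty is bookkeeping.

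First I would exploit the SVD (\ref{eq:mfrc1}) to turn the two operator norms into weighted sums of squared projections. Since $\mathbf{Q}_{m,\boxtimes}$ is orthonormal and $\mathbf{R}_{m,\boxtimes}={\rm{diag}}([r_{m,0},\cdots,r_{m,N-1}])$, one has
\[
\|\mathbf{L}_{m,\boxtimes}^{\alpha}\mathbf{x}\|_{2}^{2}
=\mathbf{x}^{T}\mathbf{Q}_{m,\boxtimes}\mathbf{R}_{m,\boxtimes}^{2}\mathbf{Q}_{m,\boxtimes}^{T}\mathbf{x}
=\sum_{i=0}^{N-1}r_{m,i}^{2}(\mathbf{q}_{m,i}^{T}\mathbf{x})^{2}
\ge r_{m,\Gamma-1}^{2}\sum_{i=\Gamma}^{N-1}(\mathbf{q}_{m,i}^{T}\mathbf{x})^{2},
\]
where the monotonicity $0=r_{m,0}\le\cdots\le r_{m,N-1}$ lets me discard the low-frequency terms and bound each remaining $r_{m,i}^{2}$ below by $r_{m,\Gamma-1}^{2}$. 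Transposing gives the companion lower bound with $\mathbf{p}_{m,i}$ in place of $\mathbf{q}_{m,i}$, exactly as in (\ref{eq:GFRFT1})--(\ref{eq:GFRFT2}).

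Second, I would rewrite the approximation error. The inverse-transform identity $\mathcal{F}_{m,\boxtimes}^{-\alpha}[\mathcal{F}_{m,\boxtimes}^{\alpha}\mathbf{x}]=\mathbf{x}$, which is verified as in (\ref{chuwz:15}) from the orthonormality of $\mathbf{P}_{m,\boxtimes}$ and $\mathbf{Q}_{m,\boxtimes}$, together with (\ref{Thm-mGF1}) yields
\[
\mathbf{x}-\mathbf{x}_{\Gamma,m,\boxtimes}^{\alpha}
=\frac{1}{2}\sum_{i=\Gamma}^{N-1}(\mathbf{p}_{m,i}\mathbf{p}_{m,i}^{T}+\mathbf{q}_{m,i}\mathbf{q}_{m,i}^{T})\mathbf{x}.
\]
Applying the triangle inequality and then using orthonormality of each family, so that $\|\sum_{i=\Gamma}^{N-1}\mathbf{p}_{m,i}\mathbf{p}_{m,i}^{T}\mathbf{x}\|_{2}^{2}=\sum_{i=\Gamma}^{N-1}(\mathbf{p}_{m,i}^{T}\mathbf{x})^{2}$, bounds $\|\mathbf{x}-\mathbf{x}_{\Gamma,m,\boxtimes}^{\alpha}\|_{2}$ by the sum of the two square-root tails, just as in (\ref{eq:GFRFT3}). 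Substituting the two lower bounds from the first step then produces the first displayed inequality $\|\mathbf{x}-\mathbf{x}_{\Gamma,m,\boxtimes}^{\alpha}\|_{2}\le\frac{1}{2r_{m,\Gamma-1}}(\|\mathbf{L}_{m,\boxtimes}^{\alpha}\mathbf{x}\|_{2}+\|(\mathbf{L}_{m,\boxtimes}^{\alpha})^{T}\mathbf{x}\|_{2})$.

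Finally, I would insert the Kronecker-sum expression (\ref{eq:mfrac}) and apply the triangle inequality once more. Writing $\mathbf{L}_{m,\boxtimes}^{\alpha}=\sum_{i=1}^{m}\mathbf{I}_{N_1\cdots N_{i-1}}\otimes\mathbf{L}_i^{\alpha}\otimes\mathbf{I}_{N_{i+1}\cdots N_m}$ splits $\|\mathbf{L}_{m,\boxtimes}^{\alpha}\mathbf{x}\|_{2}$ into $m$ terms, and the transposed norm into the other $m$. I expect the only point demanding care to be this last bookkeeping step: keeping the identity-factor ranges $N_1\cdots N_{i-1}$ and $N_{i+1}\cdots N_m$ correctly aligned across the $m$ summands, and confirming that transposition acts factorwise on a Kronecker product, $(\mathbf{I}\otimes\mathbf{L}_i^{\alpha}\otimes\mathbf{I})^{T}=\mathbf{I}\otimes(\mathbf{L}_i^{\alpha})^{T}\otimes\mathbf{I}$, so that the transpose half of the bound reproduces the claimed terms exactly.
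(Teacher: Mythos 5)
Your proposal is correct and follows exactly the route the paper intends: the paper omits this proof, stating only that it is analogous to Theorem \ref{Thm-GFRFT1}, and your three steps (the SVD-based lower bounds, the tail representation of $\mathbf{x}-\mathbf{x}_{\Gamma,m,\boxtimes}^{\alpha}$ via the inversion identity, and the final Kronecker-sum triangle inequality yielding $2m$ terms) are precisely that adaptation.
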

	%The proof of Theorem \ref{Thm-mGF} is similar to that of Theorem \ref{Thm-GFRFT1}, so we omit the proof.
	
	Set
	\begin{eqnarray}\label{Thm-mGF2}
		\mathbf{P}_{m,\otimes}=\mathbf{P}_{1}\otimes\mathbf{P}_{2}\otimes\cdots\otimes\mathbf{P}_{m},\;\;%\nonumber\\
		\mathbf{Q}_{m,\otimes}=\mathbf{Q}_{1}\otimes\mathbf{Q}_{2}\otimes\cdots\otimes\mathbf{Q}_{m}.
	\end{eqnarray}	
	Then, we define another MGFRFT on $\mathcal{G}$ by $\mathbf{P}_{m,\otimes}$ and $\mathbf{Q}_{m,\otimes}$ as follows.
	
	\begin{Definition}\label{Def:mGF2}
		Let $\mathcal{G}=\mathcal{G}_{1}\boxtimes\mathcal{G}_{2}\boxtimes\cdots\boxtimes\mathcal{G}_{m}$ be a Cartesian product graph of $m$ directed graphs $\mathcal{G}_{l},\;l=1,2,\cdots,m$, and fractional Laplacian matrices $\mathbf{L}_{l}^{\alpha},\;\;l=1,2,\cdots,m$ be given by $(\ref{eq:mLa})$, $\mathbf{P}_{m,\otimes}$ and $\mathbf{Q}_{m,\otimes}$ be defined as $(\ref{Thm-mGF2})$. Then, the
		MGFRFT $\mathcal{F}_{m,\otimes}^\alpha:\mathbb{R}^{N}\mapsto \mathbb{R}^{2N}$ of a signal $\mathbf{x}\in\mathbb{R}^{N}$ on the directed Caresian product graph $\mathcal{G}$ is defined as
		\begin{equation}\label{eq:mGF2:1}
			\mathcal{F}_{m,\otimes}^\alpha\mathbf{x}:=\dfrac{1}{2}\left(\begin{matrix}(\mathbf{P}_{m,\otimes}+\mathbf{Q}_{m,\otimes})^{T}\mathbf{x}\\
				(\mathbf{P}_{m,\otimes}-\mathbf{Q}_{m,\otimes})^{T}\mathbf{x} \end{matrix}\right).		
		\end{equation}
		Moreover, the inverse MGFRFT $\mathcal{F}_{m,\otimes}^{-\alpha}: \mathbb{R}^{2N}\mapsto \mathbb{R}^{N}$ is given by
		\begin{equation}\label{eq:mGF2:2}
			\mathcal{F}_{m,\otimes}^{-\alpha}
			\left(\begin{matrix}\mathbf{y}_{1}\\
				\mathbf{y}_{2}\end{matrix}\right)
			:=\frac{1}{2}[\mathbf{P}_{m,\otimes}(\mathbf{y}_{1}+\mathbf{y}_{2})+\mathbf{Q}_{m,\otimes}(\mathbf{y}_{1}-\mathbf{y}_{2})],
		\end{equation}
		where $\mathbf{y}_{1},\mathbf{y}_{2}\in\mathbb{R}^{N}$.
	\end{Definition}
	
	%For a signal $\mathbf{x}\in \mathbb{R}^N$ on the directed Cartesian product graph $\mathcal{G}$ with $m$ directed graphs, the new MGFRFT $\mathcal{F}_{m,\otimes}^\alpha$ has the following properties:
%	\begin{equation}\label{eq:mGF2:3}
%		\mathcal{F}_{m,\otimes}^{-\alpha}[\mathcal{F}_{m,\otimes}^\alpha\mathbf{x}]=\mathbf{x},
%	\end{equation}	
%	and
%	\begin{equation}\label{eq:mGF2:4}	
%		\|\mathcal{F}_{m,\otimes}^\alpha\mathbf{x}\|_{2}=\|\mathbf{x}\|_{2}.
%	\end{equation}	
%	
	In addition, we show that the most of the energy of signals with strong spatial-temporal correlation on directed Cartesian product graph $\mathcal{G}$ with $m$ directed graphs is concentrated in the low frequencies of the MGFRFT $\mathcal{F}_{m,\otimes}^\alpha$.
	\begin{Theorem}\label{Thm:mAn2}
		Assume that $\mathcal{G}=\mathcal{G}_{1}\boxtimes\mathcal{G}_{2}\boxtimes\cdots\boxtimes\mathcal{G}_{m}$ is a Cartesian product graph of $m$ directed graphs $\mathcal{G}_{l},\;l=1,2,\cdots,m$, the fractional Laplacian matrices $\mathbf{L}_{l}^{\alpha}$, and $\mathbf{p}_{l,i},\, \mathbf{q}_{l,i}$, $r_{l,i},\, 0\le i\le N_{l}-1,\,l=1,2,\cdots, m$ are defined the same as $(\ref{eq:mLa})$, $\alpha$ is the fractional order with $0<\alpha\le 1$, $\tau_{m,k},\, 0\leq k\leq N-1$, are sorted in ascending order of $r_{1,i_1}+r_{2,i_2}+\cdots+r_{m,i_m},\,0\leq i_l\leq N_{l}-1,\;l=1,2,\cdots, m,$ and $N=N_{1}N_{2}\cdots N_{m}$. Let $\Gamma\in\{1,2,\cdots,N\}$ be the frequency bandwidth of the GFRFT $\mathcal{F}_{m,\otimes}^\alpha$ in $(\ref{eq:mGF2:1})$, and the low frequency component of a signal $\mathbf{x}$ on $\mathcal{G}$ be
		\begin{align*}\label{chuwz:53}
			\mathbf{x}_{\Gamma,m,\otimes}^\alpha&=\frac{1}{2}\sum_{(i_1,i_2,\cdots,i_m)\in \mathcal{S}_{m,\Gamma}}\Big[(\mathbf{p}_{1,i_1}\otimes\mathbf{p}_{2,i_2}\otimes\cdots\otimes\mathbf{p}_{m,i_m})\\
			&\quad\quad\times(\mathbf{p}_{1,i_1}\otimes\mathbf{p}_{2,i_2}\otimes\cdots\otimes\mathbf{p}_{m,i_m})^{T}\mathbf{x}
			+(\mathbf{q}_{1,i_1}\otimes\mathbf{q}_{2,i_2}\otimes\cdots\otimes\mathbf{q}_{m,i_m})\\
			&\quad\quad\quad\times(\mathbf{q}_{1,i_1}\otimes\mathbf{q}_{2,i_2}\otimes\cdots\otimes\mathbf{q}_{m,i_m})^{T}\mathbf{x}\Big],
		\end{align*}
		where $\mathcal{S}_{m,\Gamma}=\{(i_1,i_2,\cdots,i_m)|\tau_{m,k}=r_{1,i_1}+r_{2,i_2}+\cdots+r_{m,i_m},0\le k\le\Gamma-1\}$. Then, we have
		\begin{align*}
		&\|\mathbf{x}-\mathbf{x}_{\Gamma,m,\otimes}^{\alpha}\|_{2}\\
		\leq&\frac{1}{2\tau_{m,\Gamma-1}}\Big[\|(\mathbf{L}_1^{\alpha}\otimes \mathbf{I}_{N_2N_{3}\cdots N_m})\mathbf{x}\|_{2} +\|(\mathbf{I}_{N_1}\otimes\mathbf{L}_2^{\alpha}\otimes \mathbf{I}_{N_{3}N_{4}\cdots N_m})\mathbf{x}\|_{2}\\
		&\quad+\cdots+\|(\mathbf{I}_{N_{1}N_{2}\cdots N_{m-1}}\otimes\mathbf{L}_m^{\alpha})\mathbf{x}\|_{2}+\|((\mathbf{L}_{1}^{\alpha})^{T}\otimes\mathbf{I}_{N_2N_{3}\cdots N_m})\mathbf{x}\|_{2}\\
		&\quad+\|(\mathbf{I}_{N_1}\otimes(\mathbf{L}_{2}^{\alpha})^{T}\otimes \mathbf{I}_{N_{3}N_{4}\cdots N_m})\mathbf{x}\|_{2}+\cdots+\|(\mathbf{I}_{N_{1}N_{2}\cdots N_{m-1}}\otimes(\mathbf{L}_{m}^{\alpha})^{T})\mathbf{x}\|_{2}\Big],
	\end{align*}
		where $\tau_{m,\Gamma-1}$ is the cutoff frequency.
	\end{Theorem}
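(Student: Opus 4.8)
The plan is to run the argument of Theorem~\ref{Thm:An2} with the two tensor factors replaced by $m$ factors. Abbreviate the multi-index $\mathbf{i}=(i_1,\dots,i_m)$ and set $\mathbf{a}_{\mathbf{i}}:=\mathbf{p}_{1,i_1}\otimes\cdots\otimes\mathbf{p}_{m,i_m}$ and $\mathbf{b}_{\mathbf{i}}:=\mathbf{q}_{1,i_1}\otimes\cdots\otimes\mathbf{q}_{m,i_m}$, which are exactly the columns of $\mathbf{P}_{m,\otimes}$ and $\mathbf{Q}_{m,\otimes}$ in $(\ref{Thm-mGF2})$; since every $\mathbf{P}_l,\mathbf{Q}_l$ is orthonormal, the families $\{\mathbf{a}_{\mathbf{i}}\}$ and $\{\mathbf{b}_{\mathbf{i}}\}$ are orthonormal bases of $\mathbb{R}^{N}$. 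Write $c_{\mathbf{i}}:=\mathbf{b}_{\mathbf{i}}^{T}\mathbf{x}$, $d_{\mathbf{i}}:=\mathbf{a}_{\mathbf{i}}^{T}\mathbf{x}$, and $\mathbf{M}_l:=\mathbf{I}_{N_1\cdots N_{l-1}}\otimes\mathbf{L}_l^{\alpha}\otimes\mathbf{I}_{N_{l+1}\cdots N_m}$ for $l=1,\dots,m$.

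First I would compute $\|\mathbf{M}_l\mathbf{x}\|_2^{2}$ for each $l$. Using $(\ref{eq:mLa})$ one has $(\mathbf{L}_l^{\alpha})^{T}\mathbf{L}_l^{\alpha}=\mathbf{Q}_l\mathbf{R}_l^{2}\mathbf{Q}_l^{T}$, so that $\mathbf{M}_l^{T}\mathbf{M}_l=\mathbf{I}_{N_1\cdots N_{l-1}}\otimes\mathbf{Q}_l\mathbf{R}_l^{2}\mathbf{Q}_l^{T}\otimes\mathbf{I}_{N_{l+1}\cdots N_m}$. Rewriting each identity block as $\mathbf{I}_{N_k}=\mathbf{Q}_k\mathbf{Q}_k^{T}$ for $k\neq l$ and invoking the Kronecker mixed-product rule shows $\mathbf{M}_l^{T}\mathbf{M}_l=\mathbf{Q}_{m,\otimes}\mathbf{D}\mathbf{Q}_{m,\otimes}^{T}$ with $\mathbf{D}$ diagonal of $\mathbf{i}$-entry $r_{l,i_l}^{2}$; hence $\{\mathbf{b}_{\mathbf{i}}\}$ diagonalizes $\mathbf{M}_l^{T}\mathbf{M}_l$ and $\|\mathbf{M}_l\mathbf{x}\|_2^{2}=\sum_{\mathbf{i}}r_{l,i_l}^{2}c_{\mathbf{i}}^{2}$, and symmetrically $\|\mathbf{M}_l^{T}\mathbf{x}\|_2^{2}=\sum_{\mathbf{i}}r_{l,i_l}^{2}d_{\mathbf{i}}^{2}$. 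This is the $m$-factor version of the two identities opening the proof of Theorem~\ref{Thm:An2}.

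Next I would combine these across $l$ by Minkowski's inequality. Viewing $(r_{l,i_l}c_{\mathbf{i}})_{\mathbf{i}}$ as vectors indexed by $\mathbf{i}$, the triangle inequality in $\ell^{2}$ gives
\[
\Big(\sum_{l=1}^{m}\|\mathbf{M}_l\mathbf{x}\|_2\Big)^{2}\ge\sum_{\mathbf{i}}\Big(\sum_{l=1}^{m}r_{l,i_l}\Big)^{2}c_{\mathbf{i}}^{2}\ge\tau_{m,\Gamma-1}^{2}\sum_{\mathbf{i}\notin\mathcal{S}_{m,\Gamma}}c_{\mathbf{i}}^{2},
\]
where the last step uses $\sum_{l}r_{l,i_l}\ge\tau_{m,\Gamma-1}$ for every $\mathbf{i}\notin\mathcal{S}_{m,\Gamma}$, by the definition of the ascending rearrangement $\tau_{m,k}$ and of $\mathcal{S}_{m,\Gamma}$. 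The identical computation with $d_{\mathbf{i}}$ and $\mathbf{M}_l^{T}$ yields $\big(\sum_l\|\mathbf{M}_l^{T}\mathbf{x}\|_2\big)^{2}\ge\tau_{m,\Gamma-1}^{2}\sum_{\mathbf{i}\notin\mathcal{S}_{m,\Gamma}}d_{\mathbf{i}}^{2}$.

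Finally, the reconstruction identity $\mathcal{F}_{m,\otimes}^{-\alpha}[\mathcal{F}_{m,\otimes}^{\alpha}\mathbf{x}]=\mathbf{x}$, which follows from the orthonormality of $\mathbf{P}_{m,\otimes}$ and $\mathbf{Q}_{m,\otimes}$ exactly as in $(\ref{chuwz:23})$, together with the definition of $\mathbf{x}_{\Gamma,m,\otimes}^{\alpha}$, gives $\mathbf{x}-\mathbf{x}_{\Gamma,m,\otimes}^{\alpha}=\frac{1}{2}\sum_{\mathbf{i}\notin\mathcal{S}_{m,\Gamma}}(\mathbf{a}_{\mathbf{i}}\mathbf{a}_{\mathbf{i}}^{T}+\mathbf{b}_{\mathbf{i}}\mathbf{b}_{\mathbf{i}}^{T})\mathbf{x}$. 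By the triangle inequality and the orthonormality of $\{\mathbf{a}_{\mathbf{i}}\}$ and $\{\mathbf{b}_{\mathbf{i}}\}$, $\|\mathbf{x}-\mathbf{x}_{\Gamma,m,\otimes}^{\alpha}\|_2\le\frac{1}{2}\big(\sum_{\mathbf{i}\notin\mathcal{S}_{m,\Gamma}}d_{\mathbf{i}}^{2}\big)^{1/2}+\frac{1}{2}\big(\sum_{\mathbf{i}\notin\mathcal{S}_{m,\Gamma}}c_{\mathbf{i}}^{2}\big)^{1/2}$, and substituting the two bounds from the previous step produces exactly the asserted estimate. I expect the only delicate point to be the Kronecker bookkeeping in the second step, namely confirming that the full $m$-fold tensor basis (not merely its $l$-th factor) diagonalizes $\mathbf{M}_l^{T}\mathbf{M}_l$ with eigenvalue $r_{l,i_l}^{2}$; the passage from two to $m$ summands in Minkowski's inequality is automatic, since the triangle inequality holds for any finite number of terms.
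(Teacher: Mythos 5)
Your proposal is correct and is essentially the proof the paper intends: the paper omits the argument, stating only that it parallels Theorem~\ref{Thm:An2}, and your plan carries out exactly that parallel — the same diagonalization of each Kronecker factor, the same Minkowski step (now with $m$ summands instead of two), and the same reconstruction-plus-triangle-inequality finish. The one point you flag as delicate, that the full $m$-fold tensor basis diagonalizes $\mathbf{M}_l^{T}\mathbf{M}_l$ with eigenvalue $r_{l,i_l}^{2}$, is handled correctly by your $\mathbf{I}_{N_k}=\mathbf{Q}_k\mathbf{Q}_k^{T}$ substitution.
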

    The proof of Theorems \ref{Thm-mGF} and \ref{Thm:mAn2} are similar to those of Theorems \ref{Thm-GFRFT1} and \ref{Thm:An2}, respectively. Therefore, for the sake of brevity, we omit the proof.
	
	\section{Numerical Experiments}\label{sec6}
	
	In this section, compared with the DGFRFT $\mathcal{F}_{q}^{\alpha}$ (\ref{Def:spqd}) proposed in \cite{YL23}, better denoising performances of our two GFRFTs $\mathcal{F}_{\boxtimes}^{\alpha}$ and $\mathcal{F}_{\otimes}^{\alpha}$ are shown on the hourly temperature data set published by the French National Meteorological Service \cite{PV17}, which is collected from 32 weather stations in the Brest region of France on January 2014. The original temperature data is denoted as matrices $\mathbf{X}_{d}=[\mathbf{x}_{d}(t_{0}),\cdots,\mathbf{x}_{d}(t_{23})],\;1\leq d \leq31$, where
	$\mathbf{x}_{d}(t_{i}),\;0\leq i\leq23$, are column vectors, representing the temperatures of 32 weather stations at the time $t_{i}$ on day $d$ of January 2014.
	These data are available at \url{https://donneespubliques.meteofrance.fr/donnees_libres/Hackathon/RADOMEH.tar.gz}.
	In this experiment, we consider the denoising performances of three GFRFTs by bandlimiting the first $\Omega$ frequencies of the temperature data set with additive noise ${\boldsymbol{\epsilon}}_{d}$, i.e.,
	\begin{equation}\label{chuwz:36}
		\widehat{\mathbf{X}}_{d}=\mathbf{X}_{d}+{\boldsymbol{\epsilon}}_{d},1\leq d\leq 31,
	\end{equation}
	where the entries ${\boldsymbol{\epsilon}}_{d}$ are i.i.d., and obey the uniform distribution on the interval $[-\varepsilon, \varepsilon]$ with $\varepsilon\in[0,8]$. Let $\alpha=0.7$ and $q=1/2$ for DGFRFT $\mathcal{F}_q^{\alpha}$ throughout this section. In Figure \ref{fig:1}, we plot the original weather data set recorded in the region of Brest in France on January 2014, and the noisy data set is collected at noon on January 1st 2014, with noises following uniform distribution on $[-4,4]$.
	All numerical simulations are performed on a Thinkbook with Intel Core i7 -11800H  and  16GB  RAM, by MATLAB R2022a.
	
	\begin{figure}[htbp]
		\centering
		\subfigure[Original Signal]{
			\begin{minipage}[t]{0.48\linewidth}
				\centering
				\includegraphics[width=1.1\linewidth]{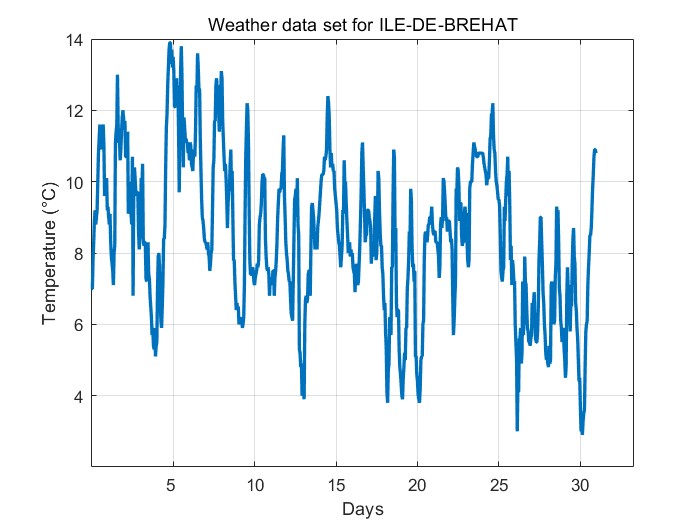}
				%\caption{fig1}
			\end{minipage}%
		}%
		\subfigure[Noisy signal]{
			\begin{minipage}[t]{0.49\linewidth}
				\centering
				\includegraphics[width=1.1\linewidth]{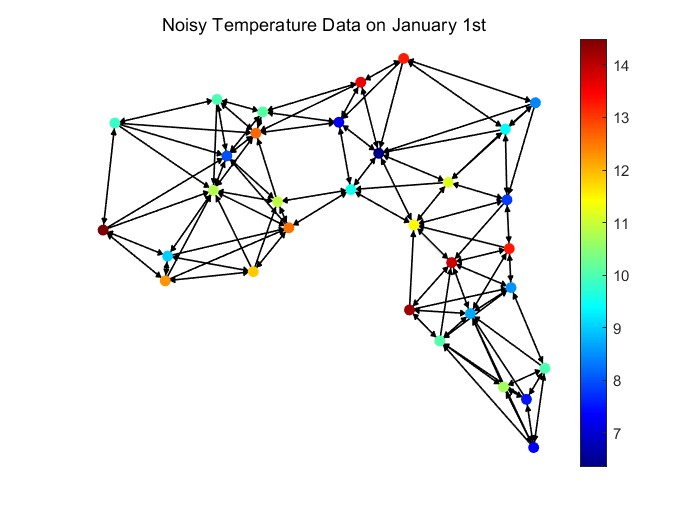}
				%\caption{fig2}
			\end{minipage}%
		}%
		\centering
		\caption{Original and noisy temperature signals collected at 32 weather stations in the region of Brest in France, on January 2014.}\label{fig:1}
	\end{figure}
	
	We consider the matrices $\mathbf{X}_{d},\;1\leq d \leq31$ as signals defined on a Cartesian product graphs $\mathcal{T}\boxtimes\mathcal{S}$, where
	$\mathcal{T}$ represents an unweighted directed line graph with 24 nodes, $\mathcal{S}$ stands for a weighted directed graph with 32 locations of weather stations as nodes, and the edges are denoted as the 5 nearest neighboring stations based on the physical distances by 3 different weights $w_p, p=1,2,3$ that are constructed the same as \cite{CCL23}.
    Three edge weight schemes were employed for graph $\mathcal{S}$: (i) $w_1$ as i.i.d. random draws from Uniform $[0.8, 1.2]$; (ii) $w_2$ as absolute Pearson correlations between the $31$ feature matrices $\{\mathbf{X}_d\}_{d=1}^{31}$ for connected nodes, perturbed by additive uniform noise in $[-0.2, 0.2]$; and (iii) $w_3$ as absolute average temperature differences between adjacent nodes, with identical noise corruption.
Assume that $\mathbf{x}^{(i)}=(\mathbf{x}^{(i)}_{d}(t))_{1\leq d\leq31,\;0\leq t\leq23}$ is a vector constituting of weather data $\mathbf{x}^{(i)}_{d}(t)$ on $i$-th vertices at the $t$-th hour of $d$-th day. Then, three different types of weights on an edge from $j$ to $i$ are defined by
	\begin{equation}\label{chuwz:39}
		w_{1}(i,j)=1+u(i,j),
	\end{equation}
	\begin{equation}\label{chuwz:40}
		w_{2}(i,j)=\max \left(\frac{\left|\operatorname{Cov}\left(\mathbf{x}^{(i)}, \mathbf{x}^{(j)}\right)\right|}{\operatorname{Var}\left(\mathbf{x}^{(i)}\right)
			\operatorname{Var}\left(\mathbf{x}^{(j)}\right)}+u(i,j), 0\right),
	\end{equation}
	and
	\begin{equation}\label{chuwz:41}
		w_{3}(i,j)=\max\left(\left|\mathbb{E}(\mathbf{x}^{(i)})-\mathbb{E}(\mathbf{x}^{(j)})\right|+u(i, j), 0\right),
	\end{equation}
	where $u(i,j)$ are i.i.d. with uniform distribution on $[-0.2,0.2]$,\; $\mathbb{E}(\cdot)$,\; Var($\cdot$), and Cov$(\cdot, \cdot)$ represent mean, standard deviation, and covariance, respectively.
	
	At the beginning, on the directed Cartesian product graph $\mathcal{G}=\mathcal{T}\boxtimes \mathcal{S}$, we consider time costs on finding the left (or right) frequency components $\mathbf{p}_{k}$ (or $\mathbf{q}_{k}$), $0\leq k\leq767$ of $\mathcal{F}_{\boxtimes}^{\alpha}$, and $\mathbf{p}_{1,i}\otimes \mathbf{p}_{2,j}$ (or $\mathbf{q}_{1,i}\otimes \mathbf{q}_{2,j}$), $0\leq i\leq23,\,0\leq j\leq31$ of $\mathcal{F}_{\otimes}^{\alpha}$, and the frequency components $\mathbf{p}_{1,q,i}\otimes\mathbf{p}_{2,q,j},\,0\leq i\leq23,\,0\leq j\leq31$ of DGFRFT $\mathcal{F}_{q}^{\alpha}$ (\ref{Def:spqd}) with three types of weights. Compared to $\mathcal{F}_{\boxtimes}^{\alpha}$ and $\mathcal{F}_{q}^{\alpha}$,  $\mathcal{F}_{\otimes}^{\alpha}$ has lower computational complexity, which are illustrated in Table \ref{freq-cost}, the times are recorded in seconds.

	\begin{table}[ht!]
		\begin{center}
			\caption{Time cost for finding frequency components of $\mathcal{F}_{\boxtimes}^{\alpha}$,\;$\mathcal{F}_{\otimes}^{\alpha}$, and $\mathcal{F}_{q}^{\alpha}$ with three types of weights.}\label{freq-cost}
			\centering
			\begin{tabular}{cccc}
				\hline
				Weights & $\mathcal{F}_{\boxtimes}^{\alpha}$ & $\mathcal{F}_{\otimes}^{\alpha}$  & $\mathcal{F}_{q}^{\alpha}$ \\
				\hline
				$w_1$ &0.0643  & 0.0014 & 0.0031\\
				\hline
				$w_2$ &0.1295  &0.0018  & 0.0020\\
				\hline
				$w_3$ &0.1909  &0.0018  & 0.0019  \\
				\hline
			\end{tabular}
		\end{center}
		\vspace {-1.0em}
	\end{table}
	%\vspace {-0.5em}
	Next, we draw three GFRFTs $\mathcal{F}_{\boxtimes}^{\alpha}\mathbf{x}_{1}$, $\mathcal{F}_{\otimes}^{\alpha}\mathbf{x}_{1}$, and $\mathcal{F}_{q}^{\alpha}\mathbf{x}_{1}$ of graph signal $\mathbf{x}_{1}$ with weight $w_1$ in Figure \ref{fig:3}, where $\mathbf{x}_{1}$ is a vectorization of matrix $\mathbf{X}_{1}$.
	It is shown that approximately 88.66\%, 99.62\%, and 81.26\% of the energy of the temperature data $\mathbf{X}_{1}$ are concentrated in the first 40 of all 768 frequencies of $\mathcal{F}_{\boxtimes}^{\alpha}$, $\mathcal{F}_{\otimes}^{\alpha}$, and $\mathcal{F}_{q}^{\alpha}$, respectively. Similarly, by using the weight $w_{2}$ (or $w_3$), our numerical experiments indicate that the first 40 frequencies of $\mathcal{F}_{\boxtimes}^{\alpha}\mathbf{x}_{1}$, $ \mathcal{F}_{\otimes}^{\alpha}\mathbf{x}_{1}$, and $\mathcal{F}_{q}^{\alpha}\mathbf{x}_{1}$ containing about 90.05\%, 99.63\%, and 81.27\% (or 93.99\%, 99.62\%, and
	81.29\%) energy of $\mathbf{x}_{1}$, respectively. Therefore, our proposed two GFRFTs $\mathcal{F}_{\boxtimes}^{\alpha}$, $\mathcal{F}_{\otimes}^{\alpha}$ are more effective in representing temperature signal $\mathbf{x}_1$ than DGFRFT $\mathcal{F}_{q}^{\alpha}$.
	
	\begin{figure}
		\centering
		\subfigure[The first components in $\mathcal{F}_{\boxtimes}^{\alpha}\mathbf{x}_{1}$]{\begin{minipage}{0.49\linewidth}
				\centering
				\includegraphics[width=1\linewidth]{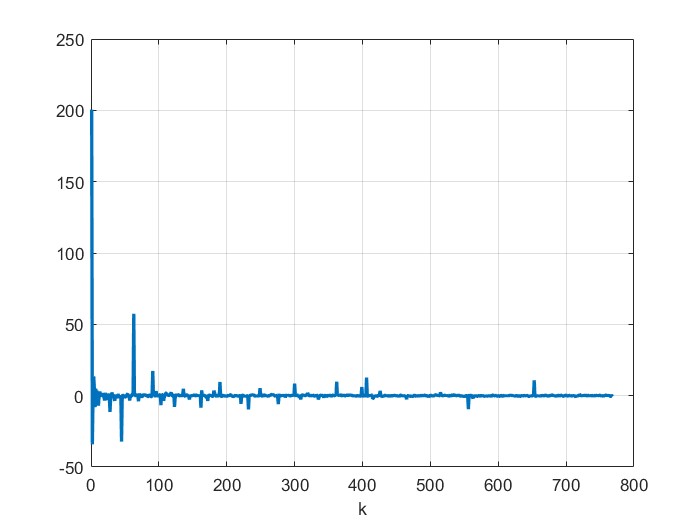}
		\end{minipage}}
		\centering
		\subfigure[The second components in $\mathcal{F}_{\boxtimes}^{\alpha}\mathbf{x}_{1}$]{\begin{minipage}{0.49\linewidth}
				\centering
				\includegraphics[width=1\linewidth]{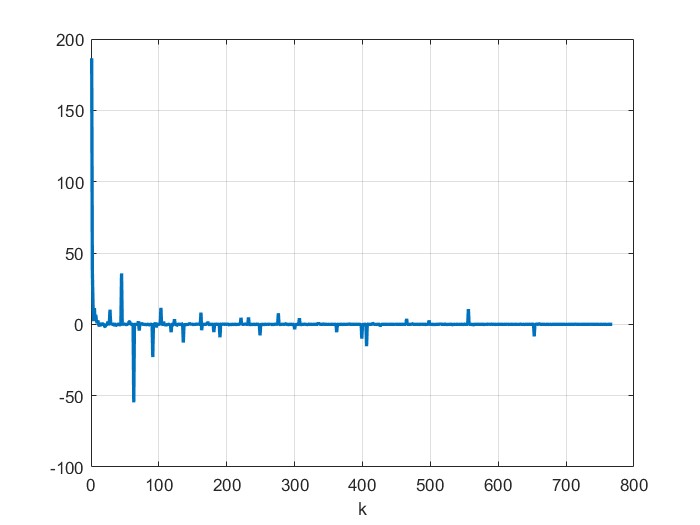}
		\end{minipage}}
		\centering
		\subfigure[The first components in $\mathcal{F}_{\otimes}^{\alpha}\mathbf{x}_{1}$]{\begin{minipage}{0.49\linewidth}
				\centering
				\includegraphics[width=1\linewidth]{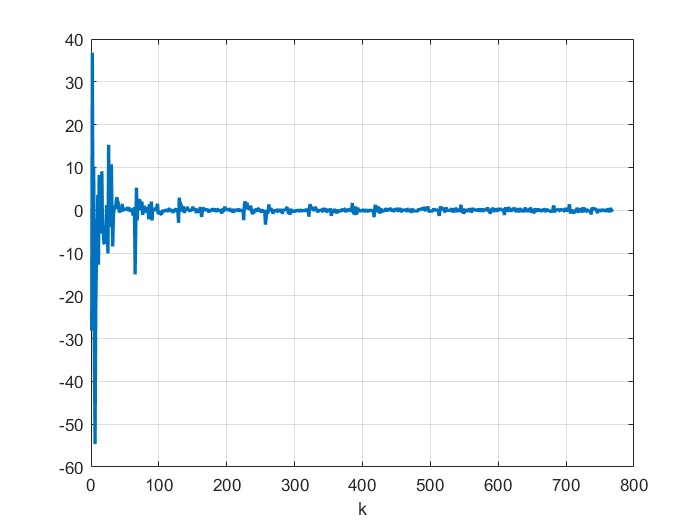}
		\end{minipage}}
		\centering
		\subfigure[The second components in $\mathcal{F}_{\otimes}^{\alpha}\mathbf{x}_{1}$]{\begin{minipage}{0.49\linewidth}
				\centering
				\includegraphics[width=1\linewidth]{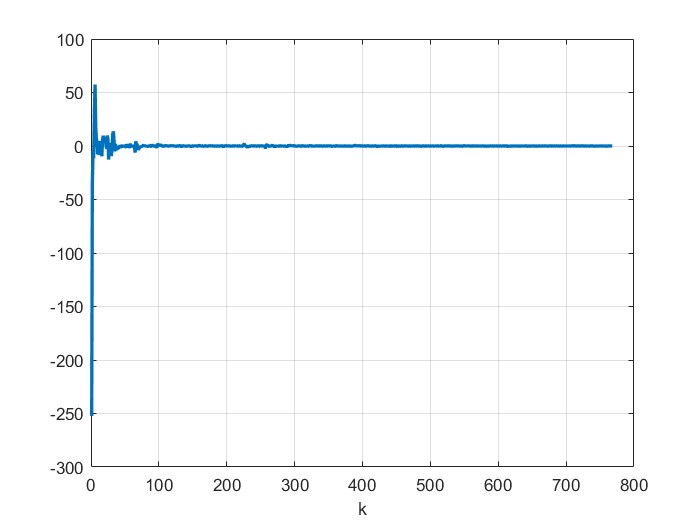}
		\end{minipage}}
		\centering
		\subfigure[The real part of $\mathcal{F}_{q}^{\alpha}\mathbf{x}_{1}$]{\begin{minipage}{0.49\linewidth}
				\centering
				\includegraphics[width=1\linewidth]{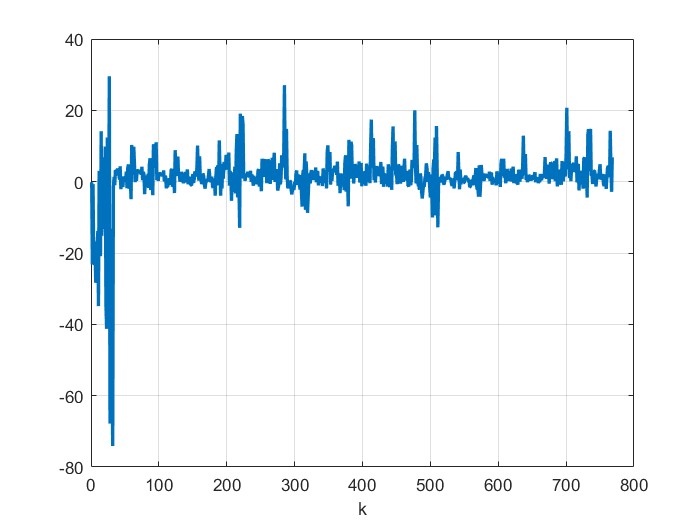}
		\end{minipage}}
		\centering
		\subfigure[The imaginary part of $\mathcal{F}_{q}^{\alpha}\mathbf{x}_{1}$]{\begin{minipage}{0.49\linewidth}
				\centering
				\includegraphics[width=1\linewidth]{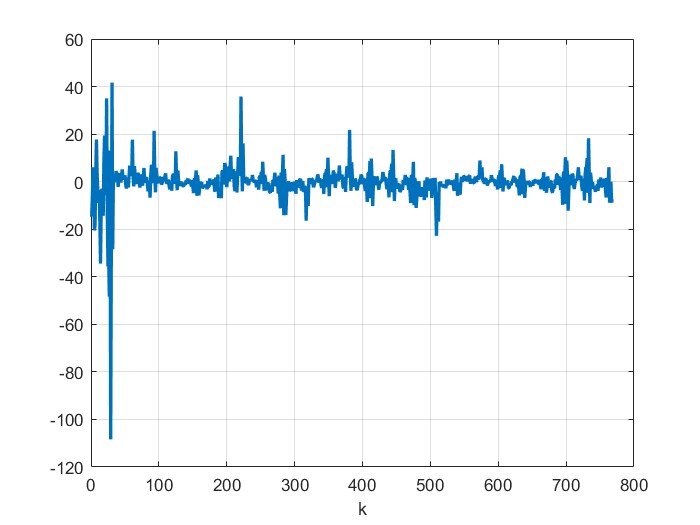}
		\end{minipage}}
		\caption{On the top are the first component $\frac{1}{2}(\mathbf{P}_{\boxtimes}+\mathbf{Q}_{\boxtimes})^{T}\mathbf{x}_1$ and the second component $\frac{1}{2}(\mathbf{P}_{\boxtimes}-\mathbf{Q}_{\boxtimes})^{T}\mathbf{x}_1$ of the GFRFT $\mathcal{F}_{\boxtimes}^{\alpha}\mathbf{x}_{1}$ (\ref{chuwz:12}). On the middle are the first component $\frac{1}{2}(\mathbf{P}_{\otimes}+\mathbf{Q}_{\otimes})^{T}\mathbf{x}_1$ and second component $\frac{1}{2}(\mathbf{P}_{\otimes}-\mathbf{Q}_{\otimes})^{T}\mathbf{x}_1$ of the GFRFT $\mathcal{F}_{\otimes}^{\alpha}\mathbf{x}_{1}$ (\ref{chuwz:21}). On the bottom are the real part and imaginary part of $(\mathbf{P}_{1,q}\otimes\mathbf{P}_{2,q})^*\mathbf{x}_1$ for the DGFRFT $\mathcal{F}_{q}^{\alpha}\mathbf{x}_{1}$ (\ref{Def:spqd}).}
		\label{fig:3}
	\end{figure}
	
	Moreover, we investigate the denoising performances of three GFRFTs $\mathcal{F}_{\boxtimes}^{\alpha}\mathbf{x}_{1}$, $\mathcal{F}_{\otimes}^{\alpha}\mathbf{x}_{1}$, and $\mathcal{F}_{q}^{\alpha}\mathbf{x}_{1}$ by using the bandlimiting processing to the noisy temperature data set $\widehat{\mathbf{X}}_1$ in (\ref{chuwz:36}), i.e., we only use the first $\Omega$-frequencies to recover the original signals $\mathbf{X}_1$. Let $1\le\Omega\le 768$, and let
	\begin{equation}\label{band:1}
		\mathbf{\widetilde{X}}_{1,\Omega,\boxtimes}^{\alpha}
		:={\rm{vec}^{-1}}\left(\frac{1}{2}\sum_{i=0}^{\Omega-1}(\mathbf{p}_{i}\mathbf{p}_{i}^{T}+\mathbf{q}_{i}\mathbf{q}_{i}^{T}){\rm{vec}}(\mathbf{\widehat{X}}_1)\right),
	\end{equation}
	\begin{eqnarray}\label{band:2}
		\mathbf{\widetilde{X}}_{1,\Omega,\otimes}^{\alpha}
		&:=&{\rm{vec}^{-1}}\Bigg(\frac{1}{2}\sum_{(i,j)\in \mathcal{S}_{\Omega}}\Big[(\mathbf{p}_{1,i}\otimes\mathbf{p}_{2,j})(\mathbf{p}_{1,i}\otimes\mathbf{p}_{2,j})^{T}{}\nonumber\\
		&&\times{\rm{vec}}(\mathbf{\widehat{X}}_1)
		+(\mathbf{q}_{1,i}\otimes\mathbf{q}_{2,j})(\mathbf{q}_{1,i}\otimes\mathbf{q}_{2,j})^{T}
		\times{\rm{vec}}(\mathbf{\widehat{X}}_1)\Big]\Bigg),
	\end{eqnarray}
	\begin{eqnarray}\label{band:3}
		\mathbf{\widetilde{X}}_{1,\Omega,q}^{\alpha}
		:={\rm{vec}^{-1}}\Bigg(\sum_{(i,j)\in \mathcal{U}_{\Omega}}(\mathbf{p}_{1,q,i}\otimes\mathbf{p}_{2,q,j})\times(\mathbf{p}_{1,q,i}\otimes\mathbf{p}_{2,q,j})^{*}{\rm{vec}}(\mathbf{\widehat{X}}_1)\Bigg),
	\end{eqnarray}
	where $\mathcal{S}_{\Omega}=\{(i,j)|\tau_k=r_{1,i}+r_{2,j},0\le k\le\Omega-1\}$, and $\mathcal{U}_{\Omega}=\{(i,j)|\mu_l=\varphi_{1,q,i}+\varphi_{2,q,j},0\le l\le\Omega-1\}$.

	\begin{figure}
		\centering
		\subfigure[Denoised signals $\widetilde{\mathbf{X}}_{1,\Omega,\boxtimes}^\alpha$ with weight $w_{1}$ ]{\begin{minipage}{0.48\linewidth}
				\centering
				\includegraphics[width=1\linewidth]{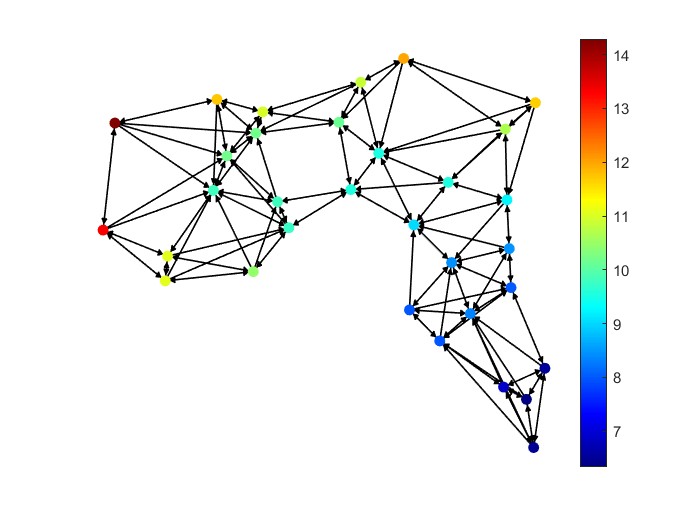}
		\end{minipage}}
		\centering
		\subfigure[Denoised signals $\widetilde{\mathbf{X}}_{1,\Omega,\boxtimes}^\alpha$ with weight $w_{2}$ ]{\begin{minipage}{0.48\linewidth}
				\centering
				\includegraphics[width=1\linewidth]{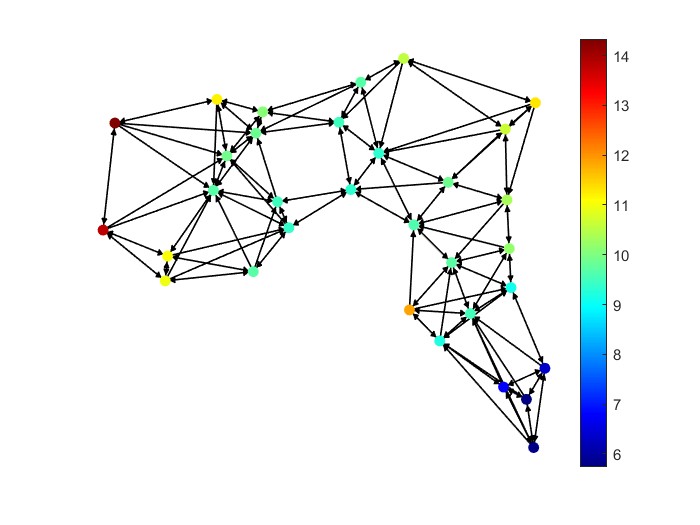}
		\end{minipage}}
		%\qquad
		\centering
		\subfigure[Denoised signals $\widetilde{\mathbf{X}}_{1,\Omega,\otimes}^\alpha$ with weight $w_{1}$ ]{\begin{minipage}{0.48\linewidth}
				\centering
				\includegraphics[width=1\linewidth]{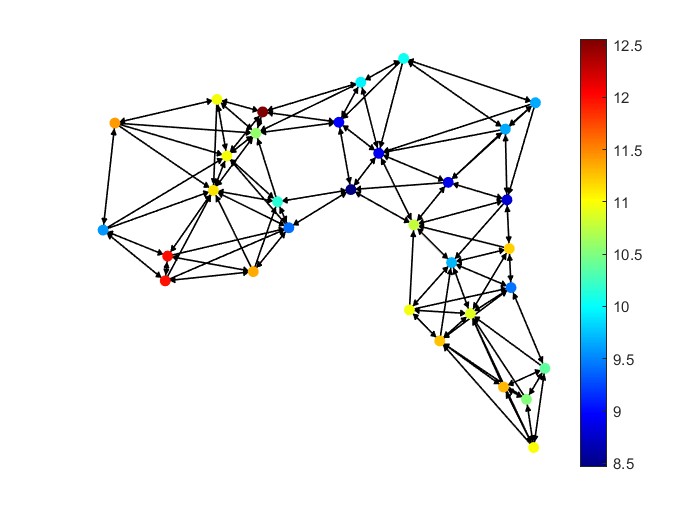}
		\end{minipage}}
		\centering
		\subfigure[Denoised signals $\widetilde{\mathbf{X}}_{1,\Omega,\otimes}^\alpha$ with weight $w_{2}$ ]{\begin{minipage}{0.48\linewidth}
				\centering
				\includegraphics[width=1\linewidth]{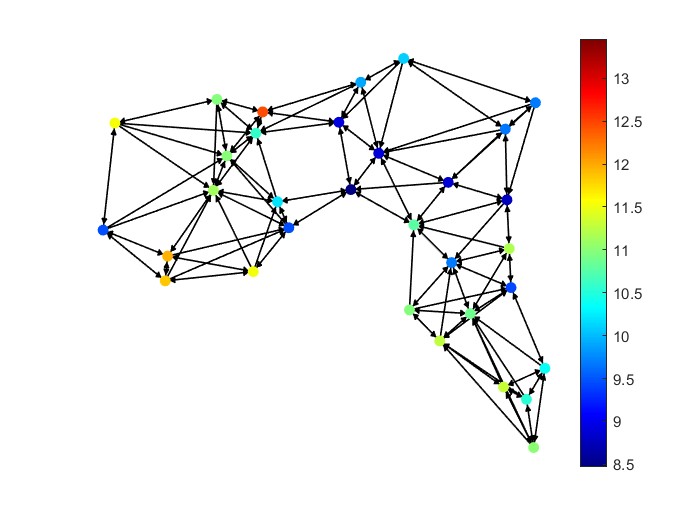}
		\end{minipage}}
		\centering
		\subfigure[Denoised signals $\widetilde{\mathbf{X}}_{1,\Omega,q}^\alpha$ with weight $w_{1}$ ]{\begin{minipage}{0.48\linewidth}
				\centering
				\includegraphics[width=1\linewidth]{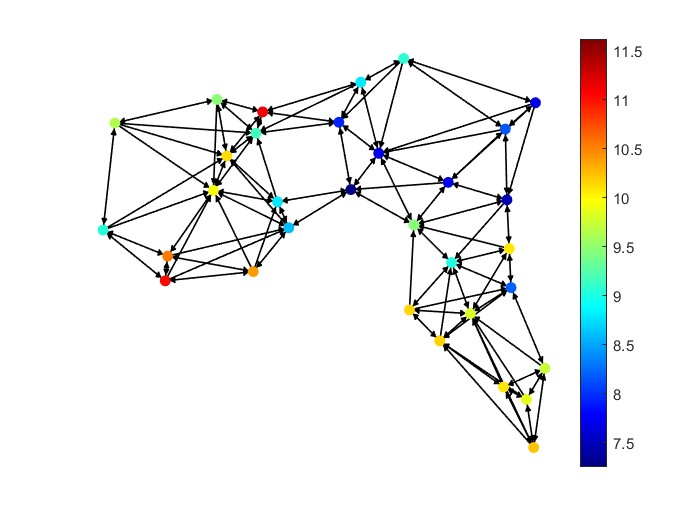}
		\end{minipage}}
		\centering
		\subfigure[Denoised signals $\widetilde{\mathbf{X}}_{1,\Omega,q}^\alpha$ with weight $w_{2}$]{\begin{minipage}{0.48\linewidth}
				\centering
				\includegraphics[width=1\linewidth]{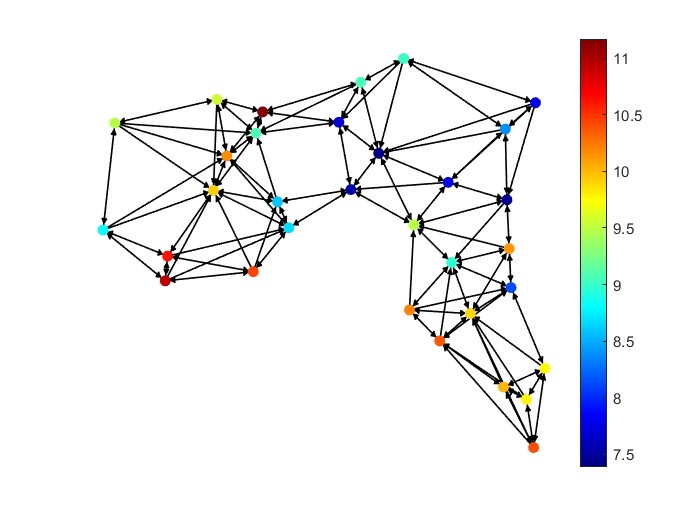}
		\end{minipage}}
		\caption{Denoised signals $\widetilde{\mathbf{X}}_{1,\Omega,\boxtimes}^\alpha$, $\widetilde{\mathbf{X}}_{1,\Omega,\otimes}^\alpha$ and $\widetilde{\mathbf{X}}_{1,\Omega,q}^\alpha$ with weights $w_{1}$ and $w_{2}$.}
		\label{fig:4}
	\end{figure}
	
	Let
	\begin{equation*}\label{chuwz:42}
		{\rm{ISNR}}(\varepsilon):=-20\log_{10} \frac{\|\mathbf{\widehat{X}}_1-\mathbf{X}_1\|_{F}}{\|\mathbf{X}_1\|_{F}},
	\end{equation*}
	\begin{equation*}\label{chuwz:43}
		{\rm{SNR}}(\varepsilon,\Omega):=-20 \log _{10} \frac{\|\mathbf{\widetilde{X}}_1-\mathbf{X}_1\|_{F}}{\|\mathbf{X}_1\|_{F}},
	\end{equation*}
	and
	\begin{equation*}\label{chuwz:44}
		{\rm{BAE}}(\varepsilon,\Omega):=\|\mathbf{\widetilde{X}}_1-\mathbf{X}_1\|_{\infty},
	\end{equation*}
	be the input signal-to-noise ratio (ISNR), the bandlimiting signal-to-noise ratio (SNR), and the bandlimiting approximation error (BAE), respectively. Here, $\mathbf{\widetilde{X}}_1$ is the bandlimited temperature data set of $\mathbf{X}_1$ in the form of (\ref{band:1}), or (\ref{band:2}), or (\ref{band:3}). Let the SNR and BAE derived from (\ref{band:1}), (\ref{band:2}), and (\ref{band:3})) be SNR$_{\boxtimes}$, SNR$_{\otimes}$ and SNR$_{q}$, and be BAE$_{\boxtimes}$, BAE$_{\otimes}$ and BAE$_{q}$, respectively. Let $\Omega=40$, in Figure $\ref{fig:4}$, we show three denoised signals $\widetilde{\mathbf{X}}_{1, \Omega,\boxtimes}^\alpha$, $\widetilde{\mathbf{X}}_{1, \Omega,\otimes}^\alpha$, and $\widetilde{\mathbf{X}}_{1,\Omega,q}^\alpha$ of noisy temperature data set $\widehat{\mathbf{X}}_{1}$ with respect to two weights $w_{1}$ and $w_{2}$, respectively. The corresponding SNRs and BAEs for bandlimiting approximation are listed in Table \ref{SNR}. %and \ref{BAE}.
Obviously, our two propose approach perform better on denoising than DGFRFT. Specially, $\mathcal{F}_{\otimes}^{\alpha}$ has best performance on recovery noisy signals among three GFRFTs.
	%\vspace {-0.8em}
	\begin{table}[ht!]
		\begin{center}
			\caption{The bandlimiting SNR and BAE for two weights $w_1$ and $w_2$.}\label{SNR}
			\vspace {1.0em}
            \centering
			\begin{tabular}{c|ccc|ccc}
				\hline
				Weights & SNR$_{\boxtimes}$ & SNR$_{\otimes}$  & SNR$_{q}$ & BAE$_{\boxtimes}$ & BAE$_{\otimes}$  & BAE$_{q}$ \\
				\hline
				$w_1$ &11.1217  & 20.9698 & 4.7456 &3.2736  & 0.8932 & 3.9921\\
				\hline
				$w_2$ &11.6522  &20.9252  & 4.7429 & 2.9165  &0.9001  & 3.9649\\
				\hline
			\end{tabular}
		\end{center}
		\vspace {-1.0em}
	\end{table}

Furthermore, we study the denoising performance of our proposed GFRFT methods under different noise levels $\varepsilon\in[0,8]$, different bandwidths $\Omega$, and different weights $w_i,i=1,2,3$, for a fixed fractional order $\alpha=0.7$. Specifically, ISNR, SNR$_{\boxtimes}$ , SNR$_{\otimes}$, SNR$_{q}$, BAE$_{\boxtimes}$, BAE$_{\otimes}$, and BAE$_{q}$ are each tested 100 times per day on average over a period of 31 days. From Tables \ref{ISNR}, \ref{3SNR}, and \ref{wiSNR}, we can see two points: (1) when denoising the noisy temperature dataset collected in the Brest region, our proposed GFRFTs $\mathcal{F}_{\boxtimes}^{\alpha}$ and $\mathcal{F}_{\otimes}^{\alpha}$ have better denoising performance than $\mathcal{F}_{q}^{\alpha}$ with respect to three different weights $w_i,i=1,2,3$, especially $\mathcal{F}_{\otimes}^{\alpha}$ has the best denoising effect. (2) When $\Omega\ge 40$, the SNRs of bandlimiting by GFRFT $\mathcal{F}_{\otimes}^{\alpha}$ changes slightly. The potential explanation is that the temperature data set in the Brest region of France exhibits a strong correlation across different hours and locations. Additionally, the energy of the original data set is predominantly concentrated in the low frequency components of the proposed GFRFT $\mathcal{F}_{\otimes}^{\alpha}$, as illustrated in the middle row of Figure \ref{fig:3}.
%\vspace {-0.8em}
	\begin{table}[h]
		\begin{center}
			\caption{The average bandlimiting SNR and BAE for the weight $w_3$, with varying noise levels $\varepsilon\in[0, 8]$ and frequency bandwidth $\Omega=40$.}\label{ISNR}
\vspace {1.0em}
			\centering
			\begin{tabular}{c|c|ccc|ccc}
				\hline
				$\varepsilon$& ISNR &SNR$_{\boxtimes}$ &SNR$_{\otimes}$  &SNR$_{q}$ & BAE$_{\boxtimes}$ &BAE$_{\otimes}$  &BAE$_{q}$ \\
				\hline
				0&          $\infty$        &12.4190   &16.6353         &4.4379         & 2.9378       &1.5290      &3.6847\\
				%\hline
				%1&          23.1250          &12.4162         &16.5691        & 4.4347   &2.9285      &1.5336      &3.6888 \\
				\hline
				2&          17.1051       &12.3472         &16.3770        &4.4248    &2.9405      &1.5468      &3.6946\\
				%\hline
				%3&          13.5865       &12.2735        & 16.0974         &4.4081  &2.9349      &1.5719       &3.7140\\
				\hline
				4&          11.0847       &12.1418        & 15.7370         &4.3853  &2.9555      &1.6239       &3.7340\\
			    %\hline
				%5&          9.1461      & 11.9710        & 15.3484         &4.3565   &2.9670      &1.6864      &3.7628\\
				\hline
				6&          7.5655       &11.7976        & 14.9325       & 4.3212  &2.9678      &1.7752      &3.8025\\
				%\hline
				%7&         6.2232       &11.5949         &14.4886        & 4.2796  &2.9890      &1.8793      &3.8421\\
				\hline
				8&          5.0621      & 11.3582        & 14.0388         &4.2318   &3.0072      &1.9981       &3.8980\\
				\hline
			\end{tabular}
		\end{center}
		%\vspace {-1.0em}
	\end{table}

	\begin{table}[h]
		\begin{center}
			\caption{The average bandlimiting SNR for the weight $w_3$, with noise level $\varepsilon=4$ and varying frequency bandwidth $\Omega$, where the average ISNR=10.0129.}\label{3SNR}
%\vspace {1.0em}
			\centering
			\begin{tabular}{cccc}
				\hline
				$\Omega$& SNR$_{\boxtimes}$ &SNR$_{\otimes}$  &SNR$_{q}$  \\
				%\hline
%				8&8.7781  &10.1232  &0.2207  \\
%				\hline
%				12&8.9726  &11.1455  &0.3450  \\
%				\hline
%				16&9.0601  &11.3591  &0.4763 \\
%				\hline
%				20&9.7213  &11.5353  &0.6552  \\
%				\hline
%				24&9.8101  &11.8318  &0.9185 \\
				\hline
				28&10.6586  &12.2602  &1.3892 \\
				\hline
				32&10.8395  &12.5766 &3.9530  \\
				\hline
				36&10.9002  &13.9150  &3.9566  \\
				\hline
				40&10.9597  &14.2123  &3.9604  \\
				\hline
				48&11.1163  &14.4032  &3.9681  \\
				\hline
				64&12.8102 &14.5860  &4.0845 \\
				\hline
%				128&13.2090 &15.4769  &4.5736  \\
				%\hline
			\end{tabular}
		\end{center}
	\end{table}
	%\vspace {-0.5em}
	\begin{table}[h]
		\begin{center}
			\caption{The average bandlimiting SNR for different weights $w_i,\;i=1,2,3$, with noise level $\varepsilon=4$ and varying frequency bandwidth $\Omega$, where the average ISNR=10.0129.}\label{wiSNR}
\vspace {1.0em}
			\centering
			\begin{tabular}{cccc}
				\hline
				Weights& SNR$_{\boxtimes}$ &SNR$_{\otimes}$  &SNR$_{q}$  \\
				\hline
				\multicolumn{4}{c}{$\Omega$=28}\\
				\hline
				$w_1$&9.5204  &12.1698  &1.8336  \\
				\hline
				$w_2$&9.5052  &12.2131 &1.7514 \\
				\hline
				$w_3$&10.6651 &12.2526 &1.3795  \\
				\hline
				\multicolumn{4}{c}{$\Omega$=32}\\
				\hline
				$w_1$&9.5169  &12.5700  &3.9529  \\
				\hline
				$w_2$&9.5141  &12.5700  &3.9529  \\
				\hline
				$w_3$&10.8351 &12.5700 &3.9529 \\
				\hline
				\multicolumn{4}{c}{$\Omega$=36}\\
				\hline
				$w_1$&9.5220  &13.7251  &3.9555  \\
				\hline
				$w_2$&9.5234  &13.7328  &3.9558  \\
				\hline
				$w_3$&10.9045  &13.9237 &3.9566 \\
				\hline
				\multicolumn{4}{c}{$\Omega$=40}\\
				\hline
				$w_1$&9.5168 &14.3760  &3.9609  \\
				\hline
				$w_2$&	9.8224 &	14.3675  &3.9599  \\
				\hline
				$w_3$&	10.9590 &	14.2206 &	3.9612 \\
				\hline
				\multicolumn{4}{c}{$\Omega$=48}\\
				\hline
				$w_1$&9.8017  &14.4623  &3.9698  \\
				\hline
				$w_2$&10.2516 &14.4775 &3.9689  \\
				\hline
				$w_3$&	11.1200  &	14.4024 &3.9680 \\
				\hline
				\multicolumn{4}{c}{$\Omega$=64}\\
				\hline
				$w_1$&10.5969  &14.5876  &4.0843  \\
				\hline
				$w_2$&	12.3486 &14.5876 &4.0843  \\
				\hline
				$w_3$&12.8325  &14.5876 &4.0843 \\
				\hline
			\end{tabular}
		\end{center}
	\end{table}

Finally, we explain the importance of fractional order $\alpha$. Table \ref{wialphaSNR} presents the bandlimiting SNR and BAE of our proposed GFRFTs $\mathcal{F}_{\boxtimes}^{\alpha}$ and $\mathcal{F}_{\otimes}^{\alpha}$ for different fractional orders $\alpha$ and weights  $w_i,i=1,2,3$. When $\alpha=1$, our proposed GFRFTs $\mathcal{F}_{\boxtimes}^{\alpha}$ and $\mathcal{F}_{\otimes}^{\alpha}$ reduce to GFTs $\mathcal{F}_{\boxtimes}$ (\ref{chuwz:4}) and $\mathcal{F}_{\otimes}$ (\ref{chuwz:7}) introduced by Cheng et al in \cite{CCL23}. It can be seen from Table \ref{wialphaSNR} that the SNRs for GFRFTs $\mathcal{F}_{\boxtimes}^{\alpha}$ and $\mathcal{F}_{\otimes}^{\alpha}$ with respect to different fractional orders and different weights $w_i,i=1,2,3$ are always higher than the results when $\alpha=1$, while the corresponding BAE is lower than the results when $\alpha=1$. This indicates that the denoising performance of our proposed GFRFTs $\mathcal{F}_{\boxtimes}^{\alpha}$ and $\mathcal{F}_{\otimes}^{\alpha}$ are better than those of GFTs $\mathcal{F}_{\boxtimes}$ and $\mathcal{F}_{\otimes}$ in \cite{CCL23}, and different selections of the factional order $\alpha$ give us greater flexibility in processing real-world data set.

	\begin{table}[h]
		\begin{center}
			\caption{The average bandlimiting SNR and BAE for different weights $w_i,\;i=1,2,3$, with noise level $\varepsilon=4$, frequency bandwidth $\Omega=40$, and varying fractional order $\alpha$, where the average ISNR=10.0129.}\label{wialphaSNR}
%\vspace {1.0em}
			\centering
			\begin{tabular}{c|cc|cc}
				\hline
				Weights& SNR$_{\boxtimes}$ &SNR$_{\otimes}$ & BAE$_{\boxtimes}$ &BAE$_{\otimes}$  \\
				\hline
				\multicolumn{5}{c}{$\alpha$=0.2}\\
				\hline
				$w_1$&12.5419  &14.3645 &2.0892  &1.4182  \\
				\hline
				$w_2$&12.2112  &14.3562&2.1067  &1.4155\\
				\hline
				$w_3$&12.2152  &14.2074 &2.0630  &1.4662 \\
				%\hline
%				\multicolumn{5}{c}{$\alpha$=0.3}\\
%				\hline
%				$w_1$&9.9520  &14.3511 &3.0187  &1.4143   \\
%				\hline
%				$w_2$&10.2945  &14.3447&2.9809  &1.4128\\
%				\hline
%				$w_3$&11.3041  &14.2197 &2.3946  &1.4532 \\
%				\hline
%				\multicolumn{5}{c}{$\alpha$=0.4}\\
%				\hline
%				$w_1$&9.7002  &14.3544  &3.0102  &1.4104 \\
%				\hline
%				$w_2$&10.1643  &14.3499&3.0510  &1.4091\\
%				\hline
%				$w_3$&11.2605  &14.2400 &2.4976  &1.4452  \\
				\hline
				\multicolumn{5}{c}{$\alpha$=0.5}\\
				\hline
				$w_1$&9.5400 &14.3501 &3.0054 &1.4088 \\
				\hline
				$w_2$&9.8705  &14.3444 &3.1503  &1.4082   \\
				\hline
				$w_3$&10.9621 &14.2415 &2.6465 &1.4409\\
				%\hline
%				\multicolumn{5}{c}{$\alpha$=0.6}\\
%				\hline
%				$w_1$&9.5260  &14.3770 &3.0064  &1.3993  \\
%				\hline
%				$w_2$&9.8938  &	14.3736 &3.1549  &1.3990\\
%				\hline
%				$w_3$&10.9601 &14.2887 &2.6640 &1.4276 \\
%				\hline
%				\multicolumn{5}{c}{$\alpha$=0.7}\\
%				\hline
%				$w_1$&9.5175  &14.3835 &3.0000  &1.4163  \\
%				\hline
%				$w_2$&9.8285  &14.3748 &3.1332  &1.4137\\
%				\hline
%				$w_3$&10.9721 &	14.2256 &2.6639 &1.4549\\
				\hline
				\multicolumn{5}{c}{$\alpha$=0.8}\\
				\hline
				$w_1$&9.5274 &14.3560 &2.9996&1.4127\\
				\hline
				$w_2$&9.7327  &14.3507  &3.0905  &	1.4118  \\
				\hline
				$w_3$&10.9806 &14.2430 &2.6653 &1.4456 \\
				%	\hline
%				\multicolumn{5}{c}{$\alpha$=0.9}\\
%				\hline
%				$w_1$&9.5228  &14.3438 &3.0037  &1.4164 \\
%				\hline
%				$w_2$&9.6053  &14.3375  &3.0662  &1.4154   \\
%				\hline
%				$w_3$&10.9630 &14.2163 &2.6671 &1.4547 \\
				\hline
				\multicolumn{5}{c}{$\alpha$=1}\\
				\hline
				$w_1$&9.5203  &14.2656 &3.0129  &	1.4207\\
				\hline
				$w_2$&9.5846 &14.3217 &	3.1577 &1.4193\\
				\hline
				$w_3$&10.9614  &14.2124 &2.6692  &1.4643 \\
				\hline
			\end{tabular}
		\end{center}
		%\vspace {-1.0em}
	\end{table}

  In summary, our proposed GFRFTs $\mathcal{F}_{\boxtimes}^{\alpha}$ and $\mathcal{F}_{\otimes}^{\alpha}$ are capable of effectively decomposing graph signals defined on directed Cartesian product graphs into distinct frequency components, and can effectively process spatiotemporal signals with strong correlation. The GFRFT proposed in this paper is specifically designed for directed spatiotemporal signals with strong linear spatiotemporal correlation, which has limitations in processing complex nonlinear data. In order to be applicable to nonlinear signals, GFRFT can be combined with nonlinear extensions, such as kernelized graph fractional Laplacian operator \cite{GGK06}: replacing the linear fractional Laplacian operator with the kernel-induced fractional Laplacian operator to model the nonlinear similarity between nodes, while preserving the linear transformation structure of GFRFT in the kernel space; or a hybrid linear-nonlinear framework: using the proposed GFRFT for linear feature extraction, and then using nonlinear models (such as neural networks, kernel methods) to capture the residual nonlinear dependencies. These research contents are beyond the scope of this paper, but provide promising directions for future research.

	\section{Conclusion}\label{sec7}
	In this paper, based on SVD, we first propose two GFRFTs $\mathcal{F}_{\boxtimes}^{\alpha}$, $\mathcal{F}_{\otimes}^{\alpha}$ on Cartesian product graph of two directed graphs $\mathcal{G}_1$ and $\mathcal{G}_2$, and we prove that graph signals with strong spatial-temporal correlation can be stably recovered by our GFRFTs. In addtion, we extend our theoretical results to Cartesian product graph of $m$ directed graphs. Finally, experimental results show that our proposed GFRFTs have fairly good denoising performance, compared with DGFRFT $\mathcal{F}_{q}^{\alpha}$, DFTs $\mathcal{F}_{\boxtimes}$ and $\mathcal{F}_{\otimes}$. Especially, $\mathcal{F}_{\otimes}^{\alpha}$ takes lowest time on computing the frequencies of original signals, but has best reconstruction performance than other GFRFTs $\mathcal{F}_{\boxtimes}^{\alpha}$ and $\mathcal{F}_{q}^{\alpha}$.

\section*{Declaration of competing interest}
 The authors declare that they have no conflict of interest.

 \section*{Funding}
   This work is supported partially by the National Natural Science Foundation of China (Grant no. 12261059).
	
 \section*{Data availability}
Data will be made available on request.

%\section*{References}	


\begin{thebibliography}{1}
		\bibitem{OFK21}
		A. Ortega, P. Frossard, J. Kova\v{c}evi\'{c}, et al., ``Graph signal processing: Overview, challenges, and applications,'' \emph{Proc. IEEE}, vol. 106, no. 5, pp. 808-828, May 2018.
		
		\bibitem{Jablonski17}
		I. Jab{\l}o\'{n}ski, ``Graph signal processing in applications to sensor networks, smart grids, and smart cities,'' in \emph{IEEE Sens. J.}, vol. 17, no. 23, pp. 7659-7666, Dec. 2017.
		
		
		\bibitem{DTT20}
		X. Dong, D. Thanou, L. Toni, et al., ``Graph signal processing for machine learning: A review and new perspectives,'' \emph{IEEE Signal Process. Mag.}, vol. 37, no. 6, pp. 117-127, Nov. 2020.
		
		
		\bibitem{HBM18}
		W. Huang, T. A. W. Bolton, J. D. Medaglia, et al., ``A graph signal processing perspective on functional brain imaging,'' \emph{Proc. IEEE}, vol. 106, no. 5, pp. 868-885, May 2018.
		
		\bibitem{RS21}
		R. Ramakrishna and A. Scaglione, ``Grid-graph signal processing (Grid-GSP): A graph signal processing framework for the power grid,'' \emph{IEEE Trans. Signal Process.}, vol. 69, pp. 2725-2739, 2021.
		
		\bibitem{LA21}
		L. Yang, A. Qi, C. Huang, et al., ``Graph Fourier transform based on $\ell_{1}$ norm variation minimization,'' \emph{Appl. Comput. Harmon. Anal.}, vol. 52, pp. 348-365, 2021.
		
		\bibitem{JXX24}
		J. Jiang, Y. Xu, H. Xu, et al., ``Multi-dimensional visual data completion via weighted hybrid graph-Laplacian,'' \emph{Signal Process.}, vol. 216, 2024, Art. no. 109305.
		
		\bibitem{SM14}
		A. Sandryhaila and J. M. F. Moura, ``Big data analysis with signal processing on graphs: Representation and processing of massive data sets with irregular structure,'' \emph{IEEE Signal Process. Mag.}, vol. 31, no. 5, pp. 80-90, Sep. 2014.
		
		
		\bibitem{SRB23}
		H. Sevi, G. Rilling, and P. Borgnat, ``Harmonic analysis on directed graphs and applications: From Fourier analysis to wavelets,'' \emph{Appl. Comput. Harmon. Anal.}, vol. 62, pp. 390-440, 2023.

		
		\bibitem{GAR19}
		R. Shafipour, A. Khodabakhsh, G. Mateos, et al, ``A directed graph Fourier transform with spread frequency components,'' \emph{IEEE Trans. Signal Process.}, vol. 67, no. 4, pp. 946-960, Feb. 2019.
		
		\bibitem{CMZ18}
		C. K. Chui, H. Mhaskar, and X. Zhuang, ``Representation of functions on big data associated with directed graphs,'' \emph{Appl. Comput. Harmon. Anal.}, vol. 44, no. 1, pp. 165-188, 2018.
		
		\bibitem{ASG20}
		A. Marques, S. Segarra, and G. Mateos, ``Signal processing on directed graphs: the role of edge directionality when processing and learning from network data,'' \emph{IEEE Signal Process. Mag.}, vol. 37, no. 6, pp. 99-116, 2020.
		
		\bibitem{SS17}
		S. Sardellitti, S. Barbarossa, and P. DiLorenzo, ``On the graph Fourier transform for directed graphs,'' \emph{IEEE J. Sel. Top. Signal Process.}, vol. 11, no. 6, pp. 796-811, 2017.
		
        \bibitem{SM13}
       A. Sandryhaila, J. M. Moura, ``Discrete signal processing on graphs,'' \emph{IEEE Trans. Signal Process.}, vol. 61, no. 7, pp. 1644-1656, Jan. 2013.

       \bibitem{FSA20}
		S. Furutani, T. Shibahara, M. Akiyama, et al., ``Graph signal processing for directed graphs based on the Hermitian Laplacian,'' in \emph{Proc. Joint Eur. Conf. Mach. Learn. Knowl. Discov. Databases}, 2020, pp. 447-463.

		\bibitem{CCS23}
		Y. Chen, C. Cheng, and Q. Sun, ``Graph Fourier transform based on singular value decomposition of directed Laplacian,'' \emph{Sampling Theory Signal Process. Data Anal.}, vol. 12, 2023, Art. no. 24.
		
		\bibitem{YL23}
		F. Yan and B. Li, ``Spectral graph fractional Fourier transform for directed graphs and its application,'' \emph{Signal Process.}, vol. 210, 2023, Art. no. 109099.
		
		\bibitem{GGW23}
		Z. Ge, H. Guo, T. Wang, et al, ``The optimal joint time-vertex graph filter design: From ordinary graph Fourier domains to fractional graph Fourier domains,'' \emph{Circu. Syst. Signal Process.}, vol. 42, no. 7, pp. 4002-4018, 2023.
		
		\bibitem{YL22}
		F. Yan and B. Li, ``Multi-dimensional graph fractional Fourier transform and its application,'' \emph{Digit. Signal Process.}, vol. 129, 2022, Art. no. 103683.
		
		%\bibitem{WY23}
		%D. Wei and Z. Yan, ``Sampling of graph signals with successive aggregations based on graph fractional Fourier transform," \emph{Digit. Signal Process.}, vol. 136, 2023, Art. no. 103970.
		
		\bibitem{WY124}
		D. Wei and Z. Yan, ``Generalized sampling of graph signals with the prior information based on graph fractional Fourier transform,'' \emph{Signal Process.}, vol. 214, 2024, Art. no. 109263.
		
		\bibitem{AKK24}
		T. Alika\c{s}ifo\v{g}lu, B. Kartal, and A. Ko\c{c}, ``Graph fractional Fourier transform: A unified theory,'' \emph{IEEE Trans. Signal Process.}, vol. 72, pp. 3834-3850, 2024.
		
		
       \bibitem{WLC17}
     Y. Wang, B. Li, Q. Cheng, ``The fractional Fourier transform on graphs,'' In \emph{2017 Asia-Pacific Signal and Information Processing Association Annual Summit and Conference (APSIPA ASC)}, IEEE, 2017, pp. 105-110.

        \bibitem{WWY20}
         J. Wu, F. Wu, Q. Yang, et al., ``Fractional spectral graph wavelets and their applications,'' \emph{Math. Probl. Eng.}, vol. 2020, Article ID 2568179, 2020.

       \bibitem{YYZ18}
        B. Yu, H. Yin, Z. Zhu, ``Spatio-temporal graph convolutional networks: a deep learning framework for traffic forecasting,'' In \emph{Proceedings 27th Inter. Joint Conf. Artificial Intell. (IJCAI'18)}, 2018, pp. 3634-3640. AAAI Press.

     \bibitem{ILS16}
        E. Isufi, A. Loukas, A. Simonetto, et al., ``Autoregressive moving average graph filtering,'' \emph{IEEE Trans. Signal Process.}, vol. 65, no. 2, pp. 274-288, Oct. 2016.

       \bibitem{HYK23}
       J. Hong, Y. Yan, E. E. Kuruoglu, et al., ``Multivariate time series forecasting with GARCH models on graphs,'' \emph{IEEE Trans. Signal Inform. Process. Networks}, vol 9, pp. 557-568, Aug. 2023.

       \bibitem{YKA22}
       Y. Yan, E. E. Kuruoglu, M. A. Altinkaya, ``Adaptive sign algorithm for graph signal processing,'' \emph{Signal Process.}, vol. 200, pp. 108662, Nov. 2022.

       \bibitem{LYK23}
      X. P. Li, Y. Yan, E. E. Kuruoglu, et al., ``Robust recovery for graph signal via $\ell_{0} $-norm regularization,'' \emph{IEEE Signal Process. Lett.}, vol. 30, pp. 1322-1326, Sep. 2023.

       \bibitem{YPK25}
      Y. Yan, C. Peng, E. E. Kuruoglu, ``Graph signal adaptive message passing,'' \emph{IEEE Signal Process. Lett.}, vol. 32, pp. 2224-2228, May 2025.

       \bibitem{CCL23}
		C. Cheng, Y. Chen, Y. J. Lee, et al., ``SVD-based graph {Fourier} transforms on directed product graphs,'' \emph{IEEE Trans. Signal Inf. Process. Networks}, vol. 9, pp. 531-541, 2023.

       \bibitem{PV17}
		N. Perraudin and P. Vandergheynst, ``Stationary signal processing on graphs,'' \emph{IEEE Trans. Signal Process.}, vol. 65, no. 13, pp. 3462-3477, Jul. 2017.

        \bibitem{GGK06}
        Y. Guo, J. Gao, P.W. Kwan, ``Kernel Laplacian eigenmaps for visualization of non-vectorial data,'' In \emph{Austra. Joint Conf. Artifical Intell.}, 2006, pp. 1179-1183. Berlin, Heidelberg: Springer Berlin Heidelberg.	

		
	\end{thebibliography}
\end{document}